\def\ps@pprintTitle{%
  \let\@oddhead\@empty
  \let\@evenhead\@empty
  \def\@oddfoot{%
    \makebox[\textwidth][s]{
      \hfill
      \makebox[0pt][c]{\thepage}
      \hfill
      \small{\textit{\today}}
    }%
  }%
  \let\@evenfoot\@oddfoot}
\def\url@leostyle{%
 \@ifundefined{selectfont}{\def\UrlFont{\sf}}{\def\UrlFont{\scriptsize\ttfamily}}} \makeatother\urlstyle{leo}
\newtheorem{theorem}{Theorem}
\newtheorem{proposition}[theorem]{Proposition}
\theoremstyle{definition}
\newtheorem{definition}[theorem]{Definition}
\newtheorem{example}[theorem]{Example}
\theoremstyle{remark}
\newtheorem{remark}[theorem]{Remark}
\theoremstyle{definition}
\newmdtheoremenv{definition2}[theorem]{Definition}
\newmdtheoremenv{proposition2}[theorem]{Proposition}
\newmdtheoremenv{theorem2}[theorem]{Theorem}
\newmdtheoremenv{corollary2}[theorem]{Corollary}
\numberwithin{equation}{section}
\numberwithin{theorem}{section}
\definecolor{Red}{rgb}{0.9,0,0.0}
\definecolor{Blue}{rgb}{0,0.0,1.0}
\newcommand{\gray}[1]{\begin{color}[rgb]{0.5, 0.5, 0.5}{#1}\end{color}}
\def\cM{\mathcal{M}}
\def\cX{\mathcal{X}}
\def\bE{\mathbb{E}}
\def\bN{\mathbb{N}}
\def\bP{\mathbb{P}}
\def\bR{\mathbb{R}}
\def\sG{\mathscr{G}}
\def\bfU{\mathbf{U}}
\def\bfV{\mathbf{V}}
\def\bfX{\mathbf{X}}
\def\bfY{\mathbf{Y}}
\def\bfZ{\mathbf{Z}}
\def\bfx{\mathbf{x}}
\def\bfy{\mathbf{y}}
\newcommand{\1}{\mathbbm{1}}            
\newcommand{\set}[1]{\{#1\}}            
\renewcommand{\mid}{\;|\;}              
\newcommand{\dif}{\operatorname{d}\!}   
\DeclareMathOperator*{\argmin}{arg\,min} 
\DeclareMathOperator{\ExpVAR}{ExpVaR}          
\DeclareMathOperator{\var}{\mathrm{VaR}}           
\DeclareMathOperator{\ES}{\mathrm{ES}}                    
\DeclareMathOperator{\wvar}{\mathrm{WVaR}}         
\def\namedlabel#1#2{\begingroup
    #2%
    \def\@currentlabel{#2}%
    \phantomsection\label{#1}\endgroup
}
\journal{TBD}
\let\footnote=\endnote
\long\def\rev#1{{\color{black}#1}}
\begin{document}

\begin{frontmatter}

\title{Coherent estimation of risk measures}


\author[label1]{Martin Aichele}\ead{martin.aichele@ecb.europa.eu}
\author[label2]{Igor Cialenco}\ead{cialenco@illinoistech.edu}
\author[label3]{Damian Jelito\corref{cor1}}\ead{damian.jelito@uj.edu.pl}
\author[label3]{Marcin Pitera}\ead{marcin.pitera@uj.edu.pl}
\address[label1]{European Central Bank, Sonnemannstra\ss e 22, 60314 Frankfurt am Main, Germany} 
\address[label2]{Department of Applied Mathematics, Illinois Institute of Technology, 10 W 32nd Str, Building REC, Room 220, Chicago, IL 60616, USA}
\address[label3]{Institute of Mathematics, Jagiellonian University, S. {\L}ojasiewicza 6, 30-348 Krak{\'o}w, Poland}
   \cortext[cor1]{Corresponding author}

\begin{abstract}
We develop a statistical framework for risk estimation, inspired by the axiomatic theory of risk measures. Coherent risk estimators---functionals of P\&L samples inheriting the economic properties of risk measures---are defined and characterized through robust representations linked to $L$-estimators. The framework provides a canonical methodology for constructing estimators with sound financial and statistical properties, unifying risk measure theory, principles for capital adequacy, and practical statistical challenges in market risk. \rev{Numerical illustrations based on simulated and market data demonstrate that coherence of a risk measure does not necessarily carry over to its estimators and show that alternative admissible weight structures within the CRE representation can lead to substantially different capital adequacy outcomes.}
\end{abstract}

\begin{keyword} estimation of risk measures \sep coherent risk measure \sep coherent risk estimator \sep expected shortfall \sep value at risk 
\MSC  91G70 \sep 91B05 \sep 62G05

\textit{JEL classification}: C13 \sep C58 \sep G32


\end{keyword}

\end{frontmatter}




\section{Introduction}
\rev{Managing risk is central to any financial institution, whether driven by regulatory requirements or internal monitoring. Equally, financial regulatory bodies are mandated to ensure that institutions remain solvent under adverse scenarios through appropriate risk assessment frameworks.} In either case, the fundamental problem is to design adequate risk measurement tools that can capture the \rev{usually highly complex} financial risk profiles based on limited~data.

The existing risk measurement methodologies, broadly speaking, have evolved along the following pathway.  In the first step, we {\it design a risk measure}, say $\rho$, under the assumption that the true law of the future's profit and loss vector of a financial position (P\&L), say $X$, is known or can be found. The function $\rho$ maps the random variable $X$ to a real number $\rho(X)$, indicating how risky the underlying position is. In the second step, we {\it estimate the risk of a financial position}, say $\hat\rho(X)$, assuming that the true distribution of $X$ is not known, and only a finite sample of $X$ is available.  Among risk measures that are often used for the first step, one can mention the value at risk ($\var$) used as the primary metric for capital requirements under the Basel II market risk framework, or the expected shortfall ($\ES$), adopted by the Basel Committee on Banking Supervision (BCBS) as part of the Basel III reforms, see \cite{Bas2006,Bas2013,Bas2019} for \rev{regulatory} details \rev{and Section~\ref{sec:preliminaries} for definitions}. 
For the second step, there are many well-known risk estimation frameworks linked, e.g., to historical simulation or Monte Carlo methods. We refer to \cite{McNeilFreEmb2015} and \cite{Car2009} for other examples of risk measures and an overview of the most popular estimation approaches.

In this work, we focus on the second step and adopt a novel perspective, fundamentally different from the existing literature, focused on the estimation function of $\rho$, say $\hat\rho$, which is later used to estimate $\rho(X)$. We argue that, similar to the risk measures themselves, any estimation must also satisfy a set of desirable financial normative properties, postulated a priori, and in addition be a `good approximation'\footnote{There is a subtle distinction between estimation and approximation. Estimation refers to inferring an unknown true value from limited data, with an emphasis on its statistical properties. Approximation, in contrast, involves simplifying a known value or formula to make it computationally tractable, while analyzing the resulting numerical error. Our approach in this work combines elements of both, but for consistency we refer to them jointly as estimation.}. This approach is motivated by the recognition that risk quantification procedures serve primarily to determine capital reserves for mitigating \rev{risk of} exposures, rather than to provide mere approximations of intrinsic risk values. To this end, we introduce the notion of the \textit{coherent risk estimator} (CRE) that maps 
\rev{P\&L samples to real numbers and satisfies specific properties that stem from} the axiomatic risk measure framework of \cite{ArtDelEbeHea1997,ArtDelEbeHea1999} that laid the foundation for the modern theory of risk measures. 
\rev{The chosen} properties encode financially and economically meaningful requirements--such as monotonicity with respect to losses to allow ordering of positions, cash-additivity to reflect a minimum  capital requirement, positive homogeneity to capture the proportional scaling of risk in a rescaled portfolio, or subadditivity to account for diversification benefits; see Section~\ref{sec:preliminaries} for precise definitions and further discussion. 

\rev{The key novel contribution of this paper is the development of the robust representation and structural classification of all CREs, see Theorem~\ref{th:coherent_estimator_representation}. The developed estimator-level dual representation reveals an inherent link between CREs and suprema of linear estimators, thereby bridging risk management and robust statistics. Indeed, we show that a CRE is} law-invariant\footnote{Similar to risk measures, a law-invariant CRE does not depend on the ordering of the input sample. See Definition~\ref{def:law_invariant_estimator}.} if and only if it can be represented as a supremum over a set of $L$-estimators; see Theorem~\ref{th:law_inv_est_representation}. Moreover, assuming additionally that a CRE is \rev{comonotonic},
we show that it can be represented as \rev{a single} $L$-estimator; see Theorem~\ref{th:comonotonic_CRE}. The importance of such results is evident.

\rev{Our canonical framework not only allows the construction of risk estimators that are designed to possess the desired risk management properties but enables the systematic selection of favorable estimators from the wide variety of existing alternatives. It also ensures that the chosen methods remain practically applicable, are robust, distribution-free, and aligned with supervisory expectations, see e.g.~\cite{EGIM}. As a byproduct, we show that many popular parametric and semi-parametric risk estimators are not coherent as their structure differs from that of CREs.}

\rev{Furthermore, through a series of examples, we show that the coherence of the underlying theoretical risk measure is not necessarily preserved by its estimators. Using simulated and market data, we show that some traditionally used estimators of $\var$ and $\ES$ do not satisfy CRE properties with non-negligible probability, even if the sampling is from an elliptical distribution.}
\rev{We also show that, even when the underlying risk measure is fixed, different CRE weighting schemes can lead to substantially different estimation accuracy, bias, and capital adequacy.}

For completeness, we discuss next the link between this paper and other results from the risk representation and risk estimation literature, and provide more insight into the underlying regulatory and supervisory background.

\rev{(A)} \rev{There} exists a vast literature on risk measure representation, developed from the ground up using an axiomatic approach, originally introduced in \cite{ArtDelEbeHea1997} and later extended to various setups; cf. \cite{DraKup2013} for an overview of one step risk measures, and \cite{Bielecki2024} for \rev{a} dynamic setup.  Within the axiomatic approach, risk measures can be described in several equivalent ways, allowing both the construction of specific measures and the development of numerical approximation schemes for them. 
As far as we know, this theory, and consequent robust representations for specific families of risk measures, have not been transferred to risk estimation \rev{before, and the interaction between normative properties of risk measures and their estimators has not been investigated,} which is the core topic of this paper.

\rev{(B)} As far as the estimation of $\rho(X)$ for a fixed $\rho$ and/or $X$ is considered, the general goal is to find a formula that is preferably simple and provides a good estimate of the true, unknown value of $\rho(X)$ based on a statistical sample of size $n$, say $\hat\rho_n(X)$. A traditional method to build $\hat\rho_n(X)$, is to approximate the distribution of $X$, or the function $\rho$, or a combination of both. A natural approach is to treat $\hat{\rho}_n(X)$ as a statistical estimate of $\rho(X)$ and to study its properties as the sample size grows. This corresponds to asymptotic properties \rev{induced} by the central limit theorem or estimation consistency analysis, i.e. property $\hat\rho_n(X)\to\rho(X)$, as $n \to \infty$. 
This approach traces back to \cite{Ace2002}, and we refer to the monograph \cite{McNeilFreEmb2015} for a comprehensive literature review; see \cite{BarTan2023} and \cite{BartlEckstein2024} for a more recent comprehensive discussions on contemporary methodologies. Here, we mention that for some classes of estimators that are also discussed in the present work, such as the empirical distribution plug-in
estimators (see Section~\ref{sec:risk_estimators} for precise definition), it was proved that they are consistent and satisfy a central limit theorem type convergence with usual rate $n^{1/2}$, cf.  \cite{Belomestny2012}, and some earlier works, \rev{see} \cite{Weber2007}, \cite{Che2008}, and \cite{Beutner2010}, but for some larger classes of risk measures, the convergence rate is not necessarily $n^{1/2}$, see \cite{BarTan2023}. 
\rev{In these works, financial properties of estimators are typically discussed as a byproduct rather than a design principle.}

\rev{(C)} In the regulatory Internal Model Approach (IMA) for Pillar I bank models, the 10-day $\var$ and 10-day $\ES$ at the confidence levels $\alpha=1\%$ and $\alpha=2.5\%$, respectively, are the key reference market risk capital metrics, see \cite{Bas2006,Bas2019} for details.\footnote{In most practical applications the confidence level $\alpha$ is set to $1\%$, $2.5\%$, or $5\%$. Two main conventions are used when quoting confidence levels for $\var$ and $\ES$: the left-tail convention, adopted in this work and common in the risk measurement literature, and the right-tail convention, which reports thresholds of the form $1-\alpha$ (e.g., $99\%$, $97.5\%$, $95\%$) and is more widespread in risk management and regulatory practice, where the right tail represents losses.} Although the risk measures themselves are fixed, regulatory and supervisory bodies rarely prescribe explicit estimation formulas. Two notable exceptions are the following locally implemented formulas used for Pillar I capital calculations: the minimal Stressed VaR formula under the Basel II framework \cite[Article 10.2]{PRA2020} and the EU Stress Scenario Risk Measure under the Basel III framework \cite[Article 11]{EU2024_397}. More generally, regulatory and supervisory texts specify desired properties of estimators -- such as conceptual soundness, proven backtesting track record, or distribution-free character -- rather than their explicit form, see e.g. \cite[Section 5.3]{EGIM}, \cite[Section 10]{PRA2020}, \cite[Market Risk Standards \& Risk Management Standards]{CBUAE}, or \cite[Section 4.5]{HKMA2024}. In practice, market risk estimators are often non-parametric and based on order statistics from historical simulation P\&Ls, i.e., sorted P\&L sample values. For $\var$ and $\ES$, the estimators are typically linear combinations of order statistics, with coefficients independent of the distribution of $X$; see \cite{EBA2025} for typical VaR look-back period and weighting choices. 

The rest of the paper is organized as follows. Section~\ref{sec:preliminaries} introduces \rev{basic} notation and recalls the definition \rev{and robust representation of coherent risk measures}. \rev{Section~\ref{sec:risk_estimators} introduces coherent risk estimators, discusses their fundamental properties, and provides illustrative examples.} Section~\ref{sec:robust} \rev{presents the main theoretical contributions: the robust representations of CREs (Theorems~\ref{th:coherent_estimator_representation},~\ref{th:law_inv_est_representation}, and~\ref{th:comonotonic_CRE}).} \rev{Section~\ref{sec:consistencyCRE} studies consistency of CREs, with emphasis on spectral risk measures.} \rev{Section~\ref{sec:ES-Lestimator} illustrates the practical relevance of the framework through numerical studies based on simulated and market data. Section~\ref{S:concluding.remarks} collects concluding remarks.}

\section{Preliminaries}\label{sec:preliminaries}

Let $(\Omega,\sG,\bP)$ be a probability space and denote by $L^0:=L^0(\Omega,\sG,\bP)$ the corresponding space of random variables. Throughout, all equalities and inequalities will be understood in $\bP$-a.s. sense. Assume that $\mathcal{X}\subset L^0$ is a vector subspace that contains all constant random variables. Denote by $\mathcal{M}^f:=\mathcal{M}^f(\Omega,\sG)$ the set of finitely additive set functions $Q\colon \sG\to [0,1]$, which are normalized to $1$, i.e. $Q(\Omega)=1$.  We also use the notation $\lfloor b \rfloor:=\max\{k\in \mathbb{Z}\colon k\leq b\}$, for $b\in \mathbb{R}$ and set $\mathcal{M}_n:=\left\{a\in \bR^n: \sum_{i=1}^{n}a_i=1, a_i\geq 0\right\}$, for $n\in \bN$.

For a fixed $n\in\bN$, we use boldface lowercase letters to denote vectors in $\mathbb{R}^n$, e.g. $\bfx=(x_1,\ldots,x_n)\in\bR^n$, and $\langle \bfx, \bfy\rangle:=\sum_{i=1}^nx_iy_i$ stands for the usual dot product in $\bR^n$.  
Moreover, we denote by $s(\mathbf{x}):=(x_{1:n},\ldots, x_{n:n})$ the ordered version of $\mathbf{x}\in\bR^n$, where $x_{i:n}$ is the $i$th smallest element of $(x_1,\ldots, x_n)$, $i=1,\ldots, n$.  
We denote by $S_n$ the set of all  permutations of  $\{1,\ldots, n\}$, and with slight abuse of notation,  we set $\sigma(\mathbf{x}):=(x_{\sigma(i)}, \ldots, x_{\sigma(n)})$, for  $\mathbf{x}\in \mathbb{R}^n$ and $\sigma\in S_n$.

A  risk measuring mapping $\rho\colon \mathcal{X}\to\bR\cup \set{+\infty}$ is called  a \textit{coherent risk measure} (CRM) if it satisfies the following properties:
\begin{enumerate}[(R1)]
\item {\it Monotonicity}, for any $X,Y\in \mathcal{X}$ such that $X\geq Y$, we have $\rho(X)\leq\rho(Y)$;
\item {\it Cash additivity},  for any $X\in \mathcal{X}$ and $m\in\bR$, we have $\rho(X+m)=\rho(X)-m$;
\item {\it Positive homogenity},  for any $X\in \mathcal{X}$ and $\lambda\geq 0$, we have $\rho(\lambda X)=\lambda\rho(X)$;
\item {\it Subadditivity}, for any $X,Y\in \mathcal{X}$, we have $\rho(X+Y)\leq\rho(X)+\rho(Y)$.
\end{enumerate}
Many natural risk measures are law-invariant, in the sense that the value of $\rho(X)$ depends only on the cumulative distribution function (CDF) of $X$ that we denote by $F_X(x):=\bP(X\leq x), x\in \bR$. Formally, $\rho\colon \mathcal{X}\to\bR\cup \set{+\infty}$ is a {\it law-invariant} risk measure if for any $X,Y \in \mathcal{X}$ with the same distribution under $\mathbb{P}$, we have that $\rho(X)=\rho(Y)$.
For law-invariant risk measures, with a slight abuse of notation, we identify $\rho(X)$ with $\rho(F_X)$.

The class of CRMs has been well studied; cf. \cite{FolSch2016} for a comprehensive review in the bounded case $\cX=L^\infty(\Omega,\sG,\bP)$, for both static and dynamic setups, and \cite{DraKup2013} as well as \cite{BieCiaDraKar2013} for a general space.  The postulated properties (R1)–(R4) are both clear and desirable from a risk management perspective. Here, the arguments $X \in \mathcal{X}$ are interpreted as the profit and loss (P\&L) of a financial entity, with $X > 0$ indicating a profit and $X \leq 0$ a loss. Accordingly: (R1) implies that a dominating P\&L entails lower risk; (R2), equivalently expressed as $\rho(X - \rho(X)) = 0$, indicates that $\rho(X)$ is the minimal deterministic capital reserve that neutralizes the risk; (R3) states that risk scales proportionally with the size of the position; and (R4) indicates that diversification reduces risk.  

Among fundamental results in the theory of risk measures are the robust or numerical representations, which allow to express risk measures as suprema over a set of probability measures, thereby linking risk evaluation to the worst-case outcomes for the so-called generalized scenarios; see~\cite[Section 4]{ArtDelEbeHea1999} for an economic view. For convenience, we formulate the result for the bounded and coherent case.

\begin{theorem}[Robust representation of CRMs]\label{th:coherent_measure_representation}
    Let $\cX=L^\infty(\Omega,\sG,\bP)$. A functional \( \rho : \cX \to \mathbb{R} \) is a coherent risk measure if and only if there exists  \( M_\rho \subset \mathcal{M}^{f} \) such that
\begin{equation}
    \rho(X) = \sup_{Q \in M_\rho} \mathbb{E}_Q[-X], \quad X \in \mathcal{X}. 
\end{equation}
Moreover, \(M_\rho \) can be chosen as a convex set for which the supremum is attained. That is, for any $X\in\cX$, there exists $Q_X^*\in M_\rho$ such that $\rho(X) = \bE_{Q_X^*}[-X]$.
\end{theorem}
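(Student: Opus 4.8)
The plan is to prove the easy implication directly from the axioms and to obtain the representation from convex duality on $L^\infty$. For the ``if'' direction, suppose $\rho(X)=\sup_{Q\in M_\rho}\bE_Q[-X]$ for some $M_\rho\subset\mathcal{M}^f$. Since every $Q\in\mathcal{M}^f$ is a positive finitely additive measure normalized to $1$, the map $X\mapsto\bE_Q[-X]$ is linear, order-reversing, and sends a constant $m$ to $-m$; taking the supremum over $Q$ then yields monotonicity (R1), cash additivity (R2), positive homogeneity (R3), and subadditivity (R4) by the standard stability of these properties under pointwise suprema of affine functionals. This direction requires only routine verification.

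For the ``only if'' direction, I would first record that the axioms force $\rho$ to be a finite-valued sublinear functional on $L^\infty$: (R3) and (R4) give convexity together with positive homogeneity, while (R1) and (R2) give the Lipschitz estimate $|\rho(X)-\rho(Y)|\le\|X-Y\|_\infty$, so $\rho$ is norm-continuous, hence lower semicontinuous. The core step is then Fenchel--Moreau biconjugation: since $\rho$ is proper, convex and lower semicontinuous, $\rho=\rho^{**}$, i.e.
\begin{equation}
\rho(X)=\sup_{\ell\in(L^\infty)^*}\bigl(\ell(X)-\rho^*(\ell)\bigr).
\end{equation}
Positive homogeneity makes $\rho^*$ the indicator of the subdifferential $\partial\rho(0)$, so the conjugate term collapses and the identity reduces to $\rho(X)=\sup_{\ell\le\rho}\ell(X)$, the supremum ranging over continuous linear functionals dominated by $\rho$.

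It remains to identify these functionals with elements of $\mathcal{M}^f$. Here I would invoke the duality $(L^\infty)^*\cong\mathrm{ba}(\Omega,\sG,\bP)$, realizing each admissible $\ell$ as a bounded finitely additive signed measure. Monotonicity (R1) pins down the sign: for $X\ge 0$ one has $\ell(X)\le\rho(X)\le\rho(0)=0$, so $-\ell$ is a positive functional and hence $\ell(X)=\bE_Q[-X]$ for a positive finitely additive $Q$; applying $\ell\le\rho$ to all constants, where cash additivity (R2) fixes $\rho(m)=-m$, then forces $Q(\Omega)=1$, i.e.\ $Q\in\mathcal{M}^f$. Convexity of the resulting set $M_\rho$ is automatic, since the collection of linear functionals dominated by the convex map $\rho$ is convex. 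Finally, attainment follows because $\partial\rho(0)$ is weak$^*$-closed and norm-bounded, hence weak$^*$-compact by Banach--Alaoglu, so the weak$^*$-continuous map $Q\mapsto\bE_Q[-X]$ attains its maximum at some $Q_X^*$; equivalently, any element of the nonempty subdifferential $\partial\rho(X)$ realizes the supremum.

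The main obstacle I anticipate is the passage to \emph{finitely} additive measures: because the topological dual of $L^\infty$ is $\mathrm{ba}$ rather than $L^1$, one cannot expect countably additive representing measures without an additional continuity (Fatou or order-continuity) hypothesis, and the sign and normalization of the $\mathrm{ba}$-functionals must be extracted purely from (R1)--(R2). The compactness step securing attainment is the second delicate point, since it depends on working in the weak$^*$ topology, in which $\partial\rho(0)$ is compact, rather than the norm topology, in which it need not be.
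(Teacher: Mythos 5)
Your proof is correct and takes essentially the same route as the source the paper relies on: the paper does not prove Theorem~\ref{th:coherent_measure_representation} itself but cites \cite{FolSch2016} and \cite{Del2002}, where exactly this duality argument --- Lipschitz continuity from (R1)--(R2), Fenchel--Moreau biconjugation with $(L^\infty)^*\cong\mathrm{ba}$, extraction of positivity and normalization of the representing functionals from monotonicity and cash additivity, and attainment via weak$^*$ compactness of $\partial\rho(0)$ --- is carried out. Nothing to correct.
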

The proof of Theorem~\ref{th:coherent_measure_representation} can be found, for example, in \cite{FolSch2016}; see also \cite{Del2002} and~\cite{Kus2001}, where the law-invariant case was considered. This theorem has been extended to more general spaces, and larger classes of risk measures; we refer to \cite{DraKup2013} for a comprehensive survey. In particular, one may take $\cX = L^1(\Omega,\sG,\bP)$, where $L^1(\Omega,\sG,\bP)$ is the space of random variables with finite expectation, which encompasses all distributions considered in this paper, including normal or Student’s \( t \)-distributions.

Among the most used and studied risk measures are the value at risk, the expected shortfall, and their weighted generalizations. For completeness, let us now briefly recall selected families of risk measures.

The value at risk ($\var$) at significance level $\alpha\in(0,1)$ is defined as
\begin{equation}\label{eq;var}
    \var_\alpha(X):= \inf \set{m\in \mathbb{R} \mid \bP(X+m<0)\leq \alpha}, \quad X\in\cX. 
\end{equation}
In other words, the $\var$ is the negative of the lower $\alpha$-quantile of $X$, i.e., the right generalized inverse of the cumulative distribution function at $\alpha\in (0,1)$. While widely used, $\var$ is known not to be a CRM due to its lack of subadditivity property (R4). That being said, for linear combinations of risk factors following elliptical distributions, and for confidence levels $\alpha<0.5$, $\var$ is subadditive and thus coherent, see \cite[Theorem~8.28]{McNeilFreEmb2015}.

The expected shortfall (ES) at significance level $\alpha\in(0,1)$ is defined as 
\begin{equation}\label{eq:ES}
\textrm{ES}_\alpha(X):= \frac{1}{\alpha} \int_0^\alpha \var_t(X)\dif t, \quad X\in\cX.
\end{equation}
It is usually interpreted as an average of $\var$s beyond a specific threshold or negative of expected loss beyond $\alpha$-quantile. The second interpretation is motivated by the fact that for continuous random variables $\textrm{ES}_\alpha$ is equal to
\begin{equation}\label{eq:ES-var}
    \textrm{ES}_\alpha(X) = \bE[-X \mid X \leq  - \var_\alpha(X)], \quad X\in\cX,
\end{equation}
see Lemma 2.13 in \cite{McNeilFreEmb2015}. 

Note that the definition of $\textrm{ES}$ could differ in literature and some authors use other names such as {\it average value at risk} or {\it conditional value at risk} to denote the mapping given by either \eqref{eq:ES} or \eqref{eq:ES-var}. Notably, $\ES$ could be seen as a main building block of law-invariant and comonotonic CRMs. Namely, for a fixed probability measure $\mu$ on $[0,1]$, one can define the weighted value at risk (WVaR) given by 
\begin{equation}\label{eq:WVAR}
\wvar_{\mu}(X):=\int_{(0,1]}\ES_\alpha(X)\mu(\dif \alpha), \quad X\in\cX,
\end{equation}
and, typically, any law-invariant and comonotonic CRM could be represented using \eqref{eq:WVAR}; we refer to \cite[Theorem 2.10]{Che2006} for more details. Similarly, one can show that comonotonic and law-invariant risk measures could be constructed directly from $\var$ using the so-called {\it risk spectrum} via the class of {\it spectral risk measures}, see~\cite{Ace2002} for details.

The closed-form formula for $\var$ and $\ES$ is known for many distribution families used in risk management. In particular, by direct computations, one can show that $\var$ and $\ES$ at level $\alpha\in (0,1)$ of a Gaussian random variable $X$ with mean $\mu$ and variance $\sigma^2$, is given by  
\begin{equation}\label{eq:es-normal}
     \var_{\alpha}(X) = -\mu -\sigma \Phi^{-1}(\alpha) \quad\textrm{and}\quad \textrm{ES}_\alpha(X) = -\mu +\sigma \frac{\phi(\Phi^{-1}(\alpha))}{\alpha},
\end{equation}
where $\phi(x):= \frac{1}{\sqrt{2\pi}}\exp(-x^2/2)$, $x\in\bR$, is the density of a standard normal, and $\Phi^{-1}$ is the standard normal quantile, see \cite{McNeilFreEmb2015} for details and more examples.

\section{Coherent risk estimators}\label{sec:risk_estimators}

From a practical standpoint, it is of paramount importance to design a reliable approximation of $\rho(X)$, for a given risk measure $\rho$, using a random sample $\bfx$ of size $n$ from $X$. Similarly to the notion of estimators from statistical analysis, for some given sample size $n\geq 1$, a \textit{risk estimator} is a measurable map $\hat\rho_n\colon \bR^n\to \bR$. Rather than focusing solely on traditional properties from statistical inference (such as consistency or asymptotic normality), this article argues that a good risk estimator should, above all, satisfy properties grounded in sound financial principles. In this section we present some general properties for a coherent risk estimator  $\hat\rho_n$ and a fixed sample size $n\in\bN$, without any reference to the specific choice of $\rho$. 

Similarly to the properties (R1)-(R4) imposed on coherent risk measures, we argue that an estimator of such measures must satisfy similar financially meaningful  properties.  

\begin{definition}[Coherent risk estimator]\label{def:risk.estimator}
A function $\hat\rho_n\colon \bR^n\to \bR$ is a {\it coherent risk estimator} (CRE) if it satisfies 
\begin{enumerate}[(E1)]
\item {\it Monotonicity}, for any $\bfx,\bfx'\in\bR^n$ such that\footnote{For vector order  $\bfx\leq \bfy$ we use the component wise comparison $x_i\leq y_i$, for  $i=1,2,\ldots,n$.}   $\bfx\geq\bfx'$, we have $\hat\rho_n(\bfx)\leq\hat\rho_n(\bfx')$;
\item {\it Cash additivity}, for any $\bfx\in\bR^n$ and $m\in\bR$, we have $\hat\rho_n(\bfx+m)=\hat\rho_n(\bfx)-m$;
\item {\it Positive homogeneity}, for any $\bfx\in\bR^n$ and $\lambda\geq 0$, we have  $\hat\rho_n(\lambda\bfx)=\lambda\hat\rho_n(\bfx)$;
\item {\it Subadditivity}, for any $\bfx,\bfx'\in\bR^n$, we have $\hat\rho_n(\bfx+\bfx')\leq\hat\rho_n(\bfx)+\hat\rho_n(\bfx')$.
\end{enumerate}
\end{definition}
\noindent Coherent risk estimators inherit all axiomatic properties of the coherent risk measures, including their financial meaning. Properties (E1)-(E4) are generic and should hold for any sample points in $\bR^n$. A generic CRE mapping $\hat{\rho}_n$ is a priori not linked or generated by a pre-specified CRM $\rho$, so that one should not expect that $\hat\rho_n$ will converge to any specfic CRM as sample size increases, $n\to\infty$, unless additional conditions are imposed on $\hat{\rho}_n$; see Section~\ref{sec:consistencyCRE} for details. From a practical and regulatory view point, a risk estimator may be interpreted as a mapping that determines the appropriate capital reserve as a function of available data, in contrast to being solely a function that somehow approximates the unknown theoretical value of risk measure.  As we show below, our definition of CRE naturally leads to the important class of non-parametric estimators of baseline risk measures based on order statistics \rev{used e.g. in the formulas for Pillar I capital calculations, see~\cite[Article 42]{EU2024_1085} and \cite[p. 267]{EGIM}.}  

Next, to capture the law-invariant property, we introduce the notion of a law-invariant estimator. We recall that $s(\bfx), \ \bfx\in\bR^n$, denotes the  sorted sample in ascending order.

\begin{definition}[Law-invariant estimator]\label{def:law_invariant_estimator}
A function  $\hat\rho_n\colon \bR^n\to \bR$ is {\it permutation or law-invariant} if, for any $\bfx\in\bR^n$, we have $\hat\rho_n(\bfx)=\hat\rho_n(s(\bfx))$.
\end{definition}
\noindent Note that in Definition~\ref{def:law_invariant_estimator} we may equivalently require that 
$\hat\rho_n(\bfx)=\hat\rho_n(\sigma(\bfx))$ for any $\bfx\in\bR^n$ and permutation $\sigma\in S_n$.  Indeed, directly from the law-invariance property, we get
\[
\hat\rho_n(\sigma(\bfx)) = \hat\rho_n(s(\sigma(\bfx)))=\hat\rho_n(s(\bfx)) = \hat\rho_n(\bfx). 
\]
It should be emphasized that while properties (E1)–(E4) directly mirror the corresponding CRM properties, the relation between the law-invariance of CRMs and CREs is more intricate. In particular, when we assume that the order of sampling can be altered without affecting the estimator, we implicitly induce sampling independence. This is a substantially stronger condition than merely imposing law-invariance of the underlying risk measure. For example, while an estimator is typically law-invariant within i.i.d. sampling framework, this need not hold when the data is generated by a time-dependent process such as {\it Generalized Auto-Regressive Conditional Heteroskedasticity} (GARCH) process, or under an {\it Exponentially Weighted Moving Average} (EWMA) framework, even if the underlying CRM is law-invariant, see e.g. \cite{HanLun2005} and \cite{Car2009} for details. For clarity, throughout most of this paper, we restrict attention to the i.i.d. sampling, though some results--including the core representation result in Theorem~\ref{th:coherent_estimator_representation}--are formulated for the general case.

\rev{Now, we present an example of an ES estimator and show directly that it is a law-invariant CRE.}

\begin{example}[\rev{Sample conditional mean} ES estimator is coherent]\label{ex:ES_coherent}
Let us consider a commonly used non-parametric estimator of the ES at level $\alpha\in (0,1)$, \rev{defined via a sample conditional mean},  given by 
 \begin{equation}\label{eq:ES-param}
 \widehat{\textrm{ES}}^{\rev{\textrm{scm}}}_{\alpha,n}(\bfx):=-\frac{1}{\lfloor n\alpha \rfloor}\sum_{i=1}^{\lfloor n\alpha \rfloor} x_{i:n},
 \end{equation}
where $\bfx\in \mathbb{R}^n$; see \cite{McNeilFreEmb2015}. For simplicity, we assume that $n$ is large enough to have $\lfloor n\alpha\rfloor \geq 1$. \rev{As indicated by the name}, this estimator can be obtained using \eqref{eq:ES-var} and considering the sample conditional mean. One can show it is a coherent and law-invariant risk estimator; see also \cite[Appendix A]{AceTas2002}.  For the sake of completeness, we provide a direct proof that  $\widehat{\textrm{ES}}^{\rev{\textrm{scm}}}_{\alpha,n}$ is a CRE, focusing only on the subadditivity (E4) as the remaining properties are trivially satisfied. We start by introducing the modified indicator function 
 \[
 \1^*_{\{x\leq x_{\lfloor n\alpha \rfloor:n}\}}:= \1_{\{x< x_{\lfloor n\alpha \rfloor:n}\}}+ \1_{\{x =  x_{\lfloor n\alpha \rfloor:n}\}} \frac{\lfloor n\alpha \rfloor-\#\{i\in \{1,\ldots, n\}\colon x_i < x_{\lfloor n\alpha \rfloor:n}\}}{\#\{i\in \{1,\ldots, n\}\colon x_i = x_{\lfloor n\alpha \rfloor:n}\}}
 \]
 for any $x\in \mathbb{R}$ and $\bfx\in \mathbb{R}^n$, which accounts for possible ties in the data. 
Clearly, 
 \[
 \widehat{\textrm{ES}}^{\rev{\textrm{scm}}}_{\alpha,n}(\bfx) = -\frac{1}{\lfloor n\alpha \rfloor}\sum_{i=1}^{n} x_{i} \1^*_{\{x_i\leq x_{\lfloor n\alpha \rfloor:n}\}}.
 \]
 Then, for any $\bfy\in \mathbb{R}^n$ and $\boldsymbol{z}:=\bfx+\bfy$, we obtain
 \begin{equation}\label{eq:ex:ES_coherent:1}
     \lfloor n\alpha \rfloor \left(\widehat{\textrm{ES}}^{\rev{\textrm{scm}}}_{\alpha,n}(\bfx)+\widehat{\ES}^{\rev{\textrm{scm}}}_{\alpha,n}(\bfy)-\widehat{\ES}^{\rev{\textrm{scm}}}_{\alpha,n}(\boldsymbol{z}) \right) = \sum_{i=1}^n x_i\left( \1^*_{\{z_i\leq z_{\lfloor n\alpha \rfloor:n}\}}-\1^*_{\{x_i\leq x_{\lfloor n\alpha \rfloor:n}\}}\right)+\sum_{i=1}^n y_i\left( \1^*_{\{z_i\leq z_{\lfloor n\alpha \rfloor:n}\}}-\1^*_{\{y_i\leq y_{\lfloor n\alpha \rfloor:n}\}}\right).
 \end{equation}
Note that for $i\in \{1,\ldots, n\}$ such that $x_i< x_{\lfloor n\alpha \rfloor:n}$ we have $\1^*_{\{z_i\leq z_{\lfloor n\alpha \rfloor:n}\}}-\1^*_{\{x_i\leq x_{\lfloor n\alpha \rfloor:n}\}}\leq \1^*_{\{z_i\leq z_{\lfloor n\alpha \rfloor:n}\}}-1\leq 0$. Also, for $i\in \{1,\ldots, n\}$ such that $x_i> x_{\lfloor n\alpha \rfloor:n}$ we have $\1^*_{\{z_i\leq z_{\lfloor n\alpha \rfloor:n}\}}-\1^*_{\{x_i\leq x_{\lfloor n\alpha \rfloor:n}\}}\geq \1^*_{\{z_i\leq z_{\lfloor n\alpha \rfloor:n}\}}\geq 0$. Consequently, we deduce
 \[
 \sum_{i=1}^n (x_i-x_{\lfloor n\alpha \rfloor:n})\left( \1^*_{\{z_i\leq z_{\lfloor n\alpha \rfloor:n}\}}-\1^*_{\{x_i\leq x_{\lfloor n\alpha \rfloor:n}\}}\right)\geq 0.
 \]
 Using this inequality and repeating the same argument for $\bfy$, from~\eqref{eq:ex:ES_coherent:1}, we get
\begin{align*}
    \lfloor n\alpha \rfloor \left(\widehat{\ES}^{\rev{\textrm{scm}}}_{\alpha,n}(\bfx)+\widehat{\ES}^{\rev{\textrm{scm}}}_{\alpha,n}(\bfy)-\widehat{\ES}^{\rev{\textrm{scm}}}_\alpha(\boldsymbol{z}) \right) &\geq \sum_{i=1}^n x_{\lfloor n\alpha \rfloor:n}\left( \1^*_{\{z_i\leq z_{\lfloor n\alpha \rfloor:n}\}}-\1^*_{\{x_i\leq x_{\lfloor n\alpha \rfloor:n}\}}\right)+\sum_{i=1}^n y_{\lfloor n\alpha \rfloor:n}\left( \1^*_{\{z_i\leq z_{\lfloor n\alpha \rfloor:n}\}}-\1^*_{\{y_i\leq y_{\lfloor n\alpha \rfloor:n}\}}\right) \\
    &= x_{\lfloor n\alpha \rfloor:n} (\lfloor n\alpha \rfloor-\lfloor n\alpha \rfloor)+y_{\lfloor n\alpha \rfloor:n} (\lfloor n\alpha \rfloor-\lfloor n\alpha \rfloor) =0,
\end{align*}
 where we used the fact that $\sum_{i=1}^n \1^*_{\{x_i\leq x_{\lfloor n\alpha \rfloor:n}\}} = \lfloor n\alpha \rfloor$. This shows that $\widehat{\ES}^{\rev{\textrm{scm}}}_{\alpha,n}(\boldsymbol{x+y})\leq \widehat{\ES}^{\rev{\textrm{scm}}}_{\alpha,n}(\bfx)+\widehat{\ES}^{\rev{\textrm{scm}}}_{\alpha,n}(\bfy)$. \hfill $\square$
\end{example}

\rev{A natural way to construct a law-invariant CRE from a given law-invariant CRM is through a plug-in procedure based on the empirical CDF. Specifically, we define the \textit{empirical risk estimator} $\hat\rho^{\textnormal{emp}}_n:\bR^n\to\bR$ by
\begin{equation}\label{eq:empirical.plugin.estimator}
\hat\rho^{\textnormal{emp}}_n(\bfx):=\rho(\hat{F}^{\textnormal{emp}}_{\bfx}),
\end{equation}
where $\bfx=(x_1,\ldots, x_n)\in \mathbb{R}^n$ and  $\hat{F}^{\textnormal{emp}}_{\bfx}(t):=\frac{1}{n}\sum_{i=1}^n \1_{\{x_i\leq t\}}$, for $t\in \mathbb{R}$. This construction  establishes a direct connection between the risk measure and its empirical analogue, allowing structural properties to carry over to the estimator. In particular, as shown next, when the underlying risk measure is coherent and law-invariant, these properties are preserved under the empirical plug-in procedure, yielding a canonical and structurally consistent coherent risk estimator.}

\begin{theorem}[Empirical risk estimator for CRM is coherent]\label{th:plug_in_ECDF}
    Let $\rho$ be a law-invariant CRM. Then, for any $n\in\bN$, the \rev{empirical risk estimator defined in \eqref{eq:empirical.plugin.estimator}} is a  law-invariant CRE. 
\end{theorem}
\begin{proof} Let $\rho$ be a law-invariant CRM. The law-invariance of $\hat\rho^{\textnormal{emp}}_n$ follows directly from the fact that $\hat{F}^{\rev{\textnormal{emp}}}_{\bfx}=\hat{F}^{\rev{\textnormal{emp}}}_{s(\bfx)}$. Second, we check the cash-additivity (E2), while omitting the  remaining properties that follow by similar arguments.  Let $\bfx\in\bR^n$ and $m\in \mathbb{R}$.  Consider a random variable $Y$ which is uniformly distributed on the set induced by sample $\bfx$, that is, on $\{x_1, \ldots, x_n\}$. Then, noting that $\hat{F}_{\bfx+m}^{\rev{\textnormal{emp}}}$ is the empirical CDF for the random variable $Y+m$, and using the cash additivity of $\rho$, we get
    \[
    \hat\rho^{\textnormal{emp}}_n(\bfx+m)=\rho(\hat{F}_{\bfx+m}^{\rev{\textnormal{emp}}}) = \rho(Y+m)=\rho(Y)-m = \rho(\hat{F}^{\rev{\textnormal{emp}}}_{\bfx})-m = \hat\rho^{\textnormal{emp}}_n(\bfx)-m,
    \]
    which completes the argument. \hfill 
\end{proof}

\rev{Next, we discuss the empirical risk estimators for $\var$ and $\ES$. Recall that $\var$ is not a CRM, as it fails to satisfy subadditivity (R4); this lack of subadditivity carries over to the corresponding empirical plug-in $\var$ estimator.}

\begin{example}[Empirical $\var$ estimator is not coherent]\label{ex:var}
\rev{The empirical risk estimator for $\var$ at level $\alpha\in (0,1)$ is a non-parametric estimator} given by the empirical quantile 
\begin{equation}\label{eq:VAR_estim}
        \widehat\var^{\rev{\textrm{emp}}}_{\alpha,n}(\mathbf{x}):=-x_{(\lfloor \alpha n \rfloor+1):n}, \quad \mathbf{x}\in \mathbb{R}^{n}.
\end{equation}
To illustrate that this estimator is non-coherent we use exemplary parameter values; the example can be easily modified to cover the general case. Namely, let us fix $\alpha=1\%$, $n=100$, and consider  $\mathbf{x}:=(-100,0,\ldots, 0)\in \mathbb{R}^{100}$ and $\mathbf{x}':=(0,-100,0,
    \ldots, 0)\in \mathbb{R}^{100}$. Then, 
\[
    100=\widehat\var^{\rev{\textrm{emp}}}_{1\%,100}(\mathbf{x}+\mathbf{x}')>\widehat\var^{\rev{\textrm{emp}}}_{1\%,100}(\mathbf{x})+\widehat\var^{\rev{\textrm{emp}}}_{1\%,100}(\mathbf{x}')=0+0=0,
\]
and thus the subadditivity property (E4) is violated. Hence,   $ \widehat\var^{\rev{\textrm{emp}}}_{1\%,100}$ is not coherent. 

\rev{More importantly, and somewhat surprisingly, we later show (see Section~\ref{S:ex1}) that the subadditivity property (E4) may be violated with non-negligible positive probability even when the sample is drawn from a multivariate normal distribution. This contrasts with the fact that the theoretical $\var$ satisfies subadditivity (R4) within the class of elliptical distributions, see ~\cite[Theorem~8.28]{McNeilFreEmb2015} for details. Also, this demonstrates that properties of theoretical risk measures do not automatically carry over to their empirical counterparts.}
\hfill $\square$
\end{example}

\begin{example}[Empirical ES estimator is coherent]\label{ex:ES_FRTB}
The empirical risk estimator for ES at level $\alpha\in (0,1)$ can be obtained by replacing $\var_{t}$ by the empirical VaR estimator given in~\eqref{eq:VAR_estim}, for $t\in (0,\alpha)$. After direct integration over $t$ in \eqref{eq:ES}, we obtain the empirical ES estimator
\begin{equation}\label{eq:ES_est_FRTB}
    \widehat{\textrm{ES}}_{\alpha,n}^{\rev{\textrm{emp}}}(\bfx):=-\frac{1}{n\alpha }\left(\sum_{i=1}^{\lfloor n\alpha \rfloor} x_{i:n}+(n\alpha-\lfloor n\alpha \rfloor)x_{(\lfloor n\alpha\rfloor+1):n}\right);
\end{equation}
see also Equation 25 in~\cite{RocUry2002} with $p_k=k/n$  or Article 11 in \cite{EU2024_397}. Alternative ES estimators based on other types of non-parametric quantiles could be also obtained \rev{using a plug-in procedure for the resulting CDF estimator}, see \cite{HynFan1996} and \rev{Section~\ref{S:ex3}}. Finally, we note that while the $\var$ estimator stated in~\eqref{eq:VAR_estim} is not coherent, the corresponding $\ES$ estimator given in~\eqref{eq:ES_est_FRTB} is \rev{a law-invariant CRE by Theorem~\ref{th:plug_in_ECDF}; this will also follow directly from Theorem~\ref{th:coherent_estimator_representation} below.}\qed
\end{example}

Another popular approach for constructing risk estimators is based on the plug-in procedure \rev{used in the parametric setup}: instead of using empirical CDF, we can use a parametric CDF for a pre-specified family of distributions. However, as we show in the following example,  these risk estimators may not be coherent even though the underlying risk measure is coherent. \rev{In fact, as we show in the next section, parametric risk estimators are structurally not coherent; see Theorem~\ref{th:coherent_estimator_representation} for details.}

\begin{example}[
Gaussian ES estimator is not coherent]\label{ex:parametric.nonCRE}
In view of~\eqref{eq:es-normal}, \rev{we can define the parametric plug-in estimator in the Gaussian setup}. Namely, the Gaussian estimator of the ES at level \rev{$\alpha\in (0,1)$ is} defined as
\[
\widehat{\textrm{ES}}^{\textrm{norm}}_{\alpha,n}(\bfx):=-\left(\hat\mu(\bfx)-\hat\sigma(\bfx)\frac{\phi(\Phi^{-1}(\alpha))}{\alpha}\right),
\]
where $\hat\mu(\bfx)$ and $\hat\sigma(\bfx)$ are the sample mean and the sample standard deviation of $\bfx\in \mathbb{R}^n$. As the name suggests, we replaced the true parametric mean and standard deviation in \eqref{eq:es-normal} by their sample estimators. \rev{Clearly, $\widehat{\textrm{ES}}^{\textrm{norm}}_{\alpha,n}$ satisfies (E2), (E3), and property (E4) can be verified by means of the Cauchy-Schwartz inequality. However, monotonicity property (E1) fails to hold true. Indeed,}
\rev{fix $\alpha=1\%$, $n=2$,}  and consider two data samples  $\bfx:=(1,0)$ and $\bfx':=(0,0)$. Clearly, $\bfx \geq \bfx'$,  but
\[
\widehat{\textrm{ES}}^{\textrm{norm}}_{1\%,2}(\bfx)\approx-\left(\frac{1}{2}-\frac{2.66}{\sqrt{2}}\right)\approx 1.38 > 0 = \widehat{\textrm{ES}}^{\textrm{norm}}_{1\%,2}(\bfx').
\]
Thus, $\widehat{\textrm{ES}}^{\textrm{norm}}_{1\%,2}$ is not monotone, and hence not coherent. \rev{More generally, one can show that the monotonicity property is violated with positive probability for Gaussian samples. To illustrate this, we fix $n\in\bN$, $\alpha\in (0,1)$, and consider an i.i.d. sample from a multivariate Gaussian distribution, denoted by $(\bfX,\bfY)=((X_i,Y_i))_{i=1}^{n}$, with the same marginal distributions, that is $X_i,Y_i\sim N(\mu,\sigma^2)$, and correlation $\rho\in (-1,1)$. Noting that $(X_i,Y_i)$ is symmetric, the conditional distribution of $(X_i,Y_i)$ given $X_i\leq Y_i$ is equal to the distribution of $(U_i,V_i) := (\min(X_i, Y_i),\max(X_i,Y_i))$, for $i\in \{1,\ldots,n\}$, see \cite[Section 2]{Vaart1998Book} for details. Thus, setting $(\bfU,\bfV):= ((U_i,V_i))_{i=1}^{n}$, we get 
\[
\bP\left[\widehat{\textrm{ES}}^{\textrm{norm}}_{\alpha,n}(\bfX) > \widehat{\textrm{ES}}^{\textrm{norm}}_{\alpha,n}(\bfY) \mid \bfX \leq \bfY\right] = \bP[\Delta \geq 0],
\]
where $\Delta := \widehat{\textrm{ES}}^{\textrm{norm}}_{\alpha,n}(\bfU) - \widehat{\textrm{ES}}^{\textrm{norm}}_{\alpha,n}(\bfV)$. Next, by direct calculation, noting that $\bfU \sim (2\mu-\bfV)$ due to symmetry, we get $\bE[\hat\sigma(\bfU)]=\bE[\hat\sigma(\bfV)]$, and consequently
\[
\bE[\Delta] = \bE[-\hat\mu(\bfU)+\hat\mu(\bfV)]=\frac{1}{n}\sum_{i=1}^{n}(\bE[V_i]-\bE[U_i])=\bE[V_1]-\bE[U_1] =\left(\mu+\sigma\sqrt{\tfrac{1-\rho}{\pi}}\right)-\left(\mu-\sigma\sqrt{\tfrac{1-\rho}{\pi}}\right)=2\sigma \sqrt{\tfrac{1-\rho}{\pi}}> 0;
\]
see \cite{NadKot2008} for details. This implies $\bP[\Delta > 0]>0$, that is, the monotonicity property is violated for the Gaussian data with positive probability.  Similar calculations and conclusions can be carried over to samples drawn from elliptical distributions.}

\end{example}

\section{Robust representations of a CRE}\label{sec:robust}
In this section, we derive new representations of the CREs, in the spirit of \cite{Del2002} and \cite{Kus2001}, cf. Theorem~\ref{th:coherent_measure_representation}. As already mentioned in Section~\ref{sec:preliminaries}, such representations for risk measures are known as robust or numerical representations, are often linked to dual biconjugates, and are obtained, e.g., via the Fenchel-Moreau theorem, see \cite{DraKup2013}. 

Let us now comment on the significance of these results.  In the statistical setup, they facilitate a full characterization of CREs, and, as we show below, these representations are closely related to the well-studied concept of $L$-estimators. Second, with such results at hand, we can establish additional structural properties of CREs. Third, these representations provide a practical tool for constructing new risk estimators or modifying existing ones. In particular, they enable the design of estimators that satisfy additional desired properties. To the best of our knowledge, the results presented in this section are new. In particular, we are not aware of any systematic studies of estimators defined as suprema over a family of $L$-estimators, a class that plays a central role in our framework.

We start with the generic representation result in which no additional assumptions are imposed on CRE.

\begin{theorem}[Robust representation of CREs]\label{th:coherent_estimator_representation}
    A function $\hat\rho_n\colon \bR^n\to \bR$ is a CRE if and only if there exists a set $M_{\hat\rho_n}^*\subset \mathcal{M}_n$ such that
\begin{equation}\label{eq:coherent_estimator_representation}
    \hat\rho_n(\bfx)= \sup_{a\in M_{\hat{\rho}_n}^*}\langle a,-\bfx\rangle, \quad \mathbf{x}\in \mathbb{R}^n.
\end{equation}
Moreover, \(M_{\hat\rho_n}^* \) can be chosen to be a convex set, independent of $\bfx$, such that the supremum is attained, i.e. for any $ \mathbf{x}\in \mathbb{R}^n$ there exists  $a^*=a^*(\mathbf{x})\in M_{\hat\rho_n}^*$ such that $\hat\rho_n(\mathbf{x})=\langle a^*,-\mathbf{x}\rangle$.  
\end{theorem}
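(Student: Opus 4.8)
The plan is to mirror the proof of Theorem~\ref{th:coherent_measure_representation}, but working directly in the finite-dimensional space $\bR^n$, where the machinery becomes elementary convex analysis. The key observation is that the axioms (E1)--(E4) say precisely that $\hat\rho_n$ is a sublinear functional on $\bR^n$ (positive homogeneity (E3) plus subadditivity (E4) give sublinearity) that is additionally monotone decreasing (E1) and cash-additive (E2). I would first dispatch the easy implication: given any $M^*_{\hat\rho_n}\subset \mathcal{M}_n$, the functional $\bfx\mapsto \sup_{a\in M^*}\langle a,-\bfx\rangle$ is automatically a supremum of linear functionals, hence convex and positively homogeneous, and one checks (E1)--(E4) directly. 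Monotonicity and cash-additivity follow because every $a\in\mathcal{M}_n$ has nonnegative entries (giving (E1)) summing to $1$ (giving (E2) via $\langle a,-(\bfx+m)\rangle=\langle a,-\bfx\rangle-m\langle a,\mathbf{1}\rangle$); subadditivity and positive homogeneity are the standard properties of a pointwise supremum of linear maps.

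The substantive direction is the "only if" part. Here the plan is to invoke the fundamental representation theorem for sublinear functionals: a function $p\colon\bR^n\to\bR$ that is positively homogeneous and subadditive equals the support function of the unique nonempty compact convex set $C=\{a\in\bR^n:\langle a,\bfx\rangle\le p(\bfx)\ \forall \bfx\}$ (its subdifferential at the origin), i.e. $p(\bfx)=\sup_{a\in C}\langle a,\bfx\rangle$, with the supremum attained by compactness. Applying this to $p(\bfx):=\hat\rho_n(-\bfx)$ yields a compact convex $M^*_{\hat\rho_n}$ with $\hat\rho_n(\bfx)=\sup_{a\in M^*}\langle a,-\bfx\rangle$ and attainment. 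The remaining task is to show that the monotonicity and cash-additivity axioms force $M^*_{\hat\rho_n}\subset\mathcal{M}_n$, i.e. that every $a\in M^*$ has nonnegative coordinates and coordinates summing to one. For the sum, cash-additivity (E2) applied to $p$ gives $p(\bfx+m\mathbf{1})=p(\bfx)+m$ for all $m$; feeding $\bfx=0$ and $\bfx=0$ with $m=\pm 1$ into the support-function identity and comparing forces $\langle a,\mathbf{1}\rangle=1$ for every $a\in M^*$. For nonnegativity, I would use monotonicity (E1): if some $a\in M^*$ had a negative coordinate $a_j<0$, then testing against $\bfx=t e_j$ for large $t>0$ would make $\langle a,\bfx\rangle=t a_j\to-\infty$ while the monotonicity of $\hat\rho_n$ bounds $p(\bfx)$ from the correct side, yielding a contradiction; the clean way is to note $C$ is contained in the polar-type cone dictated by the inequalities $p(e_j)\ge\langle a,e_j\rangle=a_j$ and to extract the sign constraint from $\hat\rho_n$ being order-reversing.

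The main obstacle I anticipate is the careful bookkeeping needed to pin down the two constraints defining $\mathcal{M}_n$ from the general subdifferential $C$, rather than the existence of the representation itself, which is standard. Concretely, translating (E1) into the coordinatewise nonnegativity of every $a\in C$ and (E2) into the normalization $\sum_i a_i=1$ requires choosing the right test vectors and using that the support function recovers these supporting inequalities. One clean route is to characterize $C$ explicitly as $C=\{a\in\bR^n:\langle a,\bfx\rangle\le \hat\rho_n(-\bfx)\ \forall\bfx\in\bR^n\}$ and verify directly that this set equals the intersection of $\mathcal{M}_n$-type constraints with the subdifferential; monotonicity gives $a\ge 0$ by testing $\bfx=-e_j$ (so $-\bfx=e_j\ge 0$ and $\hat\rho_n$ decreases), and cash-additivity gives $\sum a_i=1$ by testing constants. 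Convexity and compactness of $M^*_{\hat\rho_n}$ come for free from the subdifferential construction. I would close by recording that finiteness of $\hat\rho_n$ on all of $\bR^n$ guarantees $C$ is bounded, hence compact, which is what secures attainment of the supremum and completes the equivalence.
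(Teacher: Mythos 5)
Your proposal is correct and follows essentially the same route as the paper's Approach (a): represent the (sub)linear functional as a supremum of linear maps over a convex set, then use monotonicity against the unit vectors $e_j$ to force $a\geq 0$ and cash additivity against $\pm\mathbf{1}$ to force $\sum_i a_i=1$. The only cosmetic difference is that you invoke the support-function characterization of finite sublinear functionals directly (which also gives compactness and hence attainment for free), whereas the paper starts from the affine-minorant representation of convex functions and then eliminates the constants via positive homogeneity at $\lambda=0$.
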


\begin{proof} 
Using properties of the supremum and Definition~\ref{def:risk.estimator}, it is straightforward to check that the map defined in~\eqref{eq:coherent_estimator_representation} is a CRE. Next, we show that  any CRE admits the representation~\eqref{eq:coherent_estimator_representation}. To illustrate this result from different perspectives, we provide two arguments for this part: (a) based on generic properties of convex functionals; (b) based on a suitable identification of risk measures.

\textit{Approach (a)}. 
Combining the positive homogeneity (E3) and the subadditivity (E4)  from Definition~\ref{def:risk.estimator}, we deduce that  $\hat\rho_n$ is convex on $\bR^n$, and in view of \cite[Section 3.2.3]{BoyVan2004}, there exist sets $M_{\hat\rho_n}^*\subset \mathbb{R}^n$ and $B_{\hat\rho_n}^*\subset \mathbb{R}$ such that
\begin{equation}\label{eq:affine_representation}
    \hat\rho_n(\mathbf{x})=\sup_{\substack{a\in M_{\hat\rho_n}^*\\b\in B_{\hat\rho_n}^* }}\left( \langle a,-\mathbf{x}\rangle+b\right), \quad \mathbf{x}\in \mathbb{R}^n,
\end{equation}
and, for any $\mathbf{x}$, the above supremum is attained. By the positive homogeneity  (E3)  with $\lambda=0$, we obtain $0 = \hat\rho_n(0) = \sup_{b\in B_{\hat\rho_n}^*} b$. This  implies that $B_{\hat\rho_n}^*\subset(-\infty,0]$ and there exists a sequence $(b_j)_{ j=1}^\infty$, such that $b_j\in B_{\hat\rho_n}^*$ and $\lim_{j\to\infty}b_j=0$. Consequently, since~\eqref{eq:affine_representation} is given in terms of suprema, we can assume $B_{\hat\rho_n}^*=\{0\}$. Next, for any $i=1,\ldots, n$, let $e_i$ denote the $i$th canonical unit vector in $\mathbb{R}^n$. Then, using the monotonicity (E1), we deduce $0\geq \hat\rho_n(e_i) = \sup_{a\in  M_{\hat\rho_n}^*} \langle a,-e_i\rangle$, for all $i=1,\ldots,n$, and hence, for any $a\in M_{\hat\rho_n}^*$, we have $a\geq 0$. Let us denote by $\mathbf{1}$ the $n$-dimensional vector of ones. Then, by the cash additivity (E2), we obtain
\[
    -1=\hat\rho_n(0)-1=\hat\rho_n(\mathbf{1}) = \sup_{a\in  M_{\hat\rho_n}^*} \langle a,- \mathbf{1}\rangle,
\]
and thus, for any $a=(a_1, \ldots, a_n)\in M_{\hat\rho_n}^*$, we have $\sum_{i=1}^n a_i \geq  1$. On the other hand, we have
\[
    1=\hat\rho_n(0)+1=\hat\rho_n(-\mathbf{1}) = \sup_{a\in  M_{\hat\rho_n}^*} \langle a, \mathbf{1}\rangle,
\]
which implies that $\sum_{i=1}^na_i\leq 1$. Consequently, we get $\sum_{i=1}^n a_i=1$ and conclude the proof.

\textit{Approach (b)}. Let $\hat\Omega:=\{\omega_1,\ldots,\omega_n\}$ be a generic $n$-tuple, and let   $\hat\sG$ be the  family of all subsets of $\hat\Omega$. For  $X\in L^0(\hat\Omega, \hat{\sG})$ we define $\rho(X):=\hat\rho_n((X(\omega_1),\ldots, X(\omega_n)))$. Clearly, $\rho$ satisfies properties (R1)-(R4), as $\hat\rho$ satisfies properties (E1)-(E4), and thus $\rho$ it is a CRM on $(\hat\Omega, \hat{\sG})$. In view of Theorem~\ref{th:coherent_measure_representation},
there exists a family $M_\rho\subset \mathcal{M}^f(\hat\Omega, \hat{\mathcal{G}})$ such that 
    \[
    \hat\rho_n((X(\omega_1),\ldots, X(\omega_n)))=\rho(X)=\sup_{Q\in M_\rho}\mathbb{E}_Q[-X],
    \]
and, for any $X$ the supremum is attained. Since $\hat\Omega$ is finite, any $Q\in M_\rho$ is a probability measure which could be identified with a vector $a:=(Q(\{\omega_1\}), \ldots, Q(\{\omega_n\}))$. Noting that $\mathbb{E}_Q[-X]=\langle a, -(X(\omega_1), \ldots, X(\omega_n))\rangle$, we get~\eqref{eq:coherent_estimator_representation}.  

Finally, by Theorem~\ref{th:coherent_measure_representation}, the convexity $M^*_{\hat\rho_n}$ and the existence of the maximizer $a^*$ follow at once. The proof is complete. 
\end{proof}

\rev{Theorem~\ref{th:coherent_estimator_representation} implies that any CRE must be a supremum over linear combinations of the sample. Hence, structurally different estimators, such as parametric plug-in estimators or GPD tail fits from Section~\ref{S:ex2}, are not expected to be coherent.}

In contrast to Theorem~\ref{th:coherent_measure_representation}, Theorem~\ref{th:coherent_estimator_representation} does not require any additional assumptions on the domain of the underlying mapping. The reason is that for any fixed $n \in \bN$, the realized samples $\bfx$ are elements of the finite-dimensional space $\bR^n$. Consequently, we do not impose any restrictions on the sampling scheme, such as the distribution from which the samples are drawn. 
Also, Theorem~\ref{th:coherent_estimator_representation} accommodates general non-i.i.d. setups, including sampling from time series models or scenario weighting. Nevertheless, in most practical applications, one is typically interested in estimators whose value does not depend on the order of the sampling\rev{, i.e. are law-invariant}.

Let us now derive a version of Theorem~\ref{th:coherent_estimator_representation} for law-invariant CREs. This representation is based on the sorted sample $s(\bfx)$, which reflects an important practical aspect: in real-life risk management applications, the first step is usually to sort the observed P\&Ls (i.e., construct the empirical distribution) before performing the risk computations. We already mentioned that CRE representations are interlinked with $L$-estimators. The next result states that any law-invariant CRE could be represented as \rev{the supremum} over a family of $L$-estimators.

\begin{theorem}[Robust representation of law-invariant CREs]\label{th:law_inv_est_representation} A function $\hat\rho_n\colon \bR^n\to \bR$ is a law-invariant CRE if and only if there exists a set $M^s_{\hat\rho_n}\subset \mathcal{M}_n$ satisfying $a_1\geq a_2\geq \ldots \geq a_n$ for any $a=(a_1, \ldots, a_n)\in M_{\hat\rho_n}^s$, and such that
\begin{equation}\label{eq:law_inv_estimator_representation}
    \hat\rho_n(\bfx)= \sup_{a\in M_{\hat\rho_n}^s}\langle a,-s(\bfx)\rangle, \quad \mathbf{x}\in \mathbb{R}^n.
\end{equation}
Moreover, \(M_{\hat\rho_n}^s \) can be chosen as a convex set for which the supremum is attained, that is, for any $ \mathbf{x}\in \mathbb{R}^n$ there exist weights $a^{s}=a^{s}(\mathbf{x})\in M_{\hat\rho_n}^s$, such that $\hat\rho_n(\mathbf{x})=\langle a^s,-s(\bfx)\rangle$. 
\end{theorem}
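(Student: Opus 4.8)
The plan is to deduce this from the generic representation of Theorem~\ref{th:coherent_estimator_representation}, handling the two implications separately and isolating the role played by the monotonicity of the weights. The easy direction is sufficiency: I would fix a set $M^s_{\hat\rho_n}\subset\mathcal{M}_n$ with $a_1\geq\cdots\geq a_n$ for every $a\in M^s_{\hat\rho_n}$, define $\hat\rho_n$ by \eqref{eq:law_inv_estimator_representation}, and verify law-invariance together with (E1)--(E4). Law-invariance is immediate since $s(\sigma(\bfx))=s(\bfx)$. Cash additivity and positive homogeneity are routine, using $s(\bfx+m\mathbf{1})=s(\bfx)+m\mathbf{1}$ together with $\sum_i a_i=1$, and $s(\lambda\bfx)=\lambda s(\bfx)$ for $\lambda\ge 0$. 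Monotonicity uses the order-preservation of sorting, i.e.\ $\bfx\ge\bfx'$ implies $s(\bfx)\ge s(\bfx')$ componentwise, combined with $a\ge 0$.

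The one axiom that genuinely needs the decreasing-weight hypothesis is subadditivity (E4), and I expect this to be the crux of the sufficiency direction. The key structural fact is that, for each $k$, the partial sum of the $k$ smallest coordinates $S_k(\bfx):=\sum_{i=1}^k x_{i:n}=\min_{|I|=k}\sum_{i\in I}x_i$ is a minimum of linear functionals, hence concave and positively homogeneous, and therefore superadditive: $S_k(\bfx+\bfx')\ge S_k(\bfx)+S_k(\bfx')$. Writing a decreasing weight vector by Abel summation as $\langle a,s(\bfx)\rangle=\sum_{k=1}^{n}(a_k-a_{k+1})S_k(\bfx)$ with $a_{n+1}:=0$ exhibits $\bfx\mapsto\langle a,s(\bfx)\rangle$ as a nonnegative combination (the coefficients $a_k-a_{k+1}\ge 0$ are exactly the monotonicity of the weights) of superadditive maps, hence superadditive. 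Passing to $-s(\cdot)$ and taking the supremum over $a$ then yields (E4).

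For necessity, I would start from a law-invariant CRE and invoke Theorem~\ref{th:coherent_estimator_representation} to obtain $M^*_{\hat\rho_n}\subset\mathcal{M}_n$ with $\hat\rho_n(\bfx)=\sup_{a\in M^*_{\hat\rho_n}}\langle a,-\bfx\rangle$. The idea is to symmetrize and then sort. Since $\hat\rho_n(\bfx)=\hat\rho_n(\sigma(\bfx))$ for every $\sigma\in S_n$, replacing $M^*_{\hat\rho_n}$ by its orbit $\bigcup_{\sigma\in S_n}\sigma(M^*_{\hat\rho_n})$ leaves the represented functional unchanged but produces a permutation-invariant weight set. For such a set the rearrangement inequality gives, for each $\bfx$, that $\sup_a\langle a,-\bfx\rangle$ equals the supremum of $\langle a^{\downarrow},(-\bfx)^{\downarrow}\rangle$, where $a^{\downarrow}$ is the decreasing rearrangement of $a$; since $(-\bfx)^{\downarrow}=-s(\bfx)$, taking $M^s_{\hat\rho_n}:=\{a^{\downarrow}:a\in M^*_{\hat\rho_n}\}$ yields \eqref{eq:law_inv_estimator_representation} with decreasing weights. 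The two inequalities needed here---that each $\langle a,-\bfx\rangle$ is dominated by $\langle a^{\downarrow},-s(\bfx)\rangle$, and that each $\langle a^{\downarrow},-s(\bfx)\rangle$ coincides with $\langle a',-\bfx\rangle$ for a permutation $a'$ of $a^{\downarrow}$ already in the symmetrized set---both follow from permutation invariance together with rearrangement.

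Finally, to secure convexity and attainment, I would replace $M^s_{\hat\rho_n}$ by its closed convex hull. The subset of $\mathcal{M}_n$ consisting of vectors with $a_1\ge\cdots\ge a_n$ is closed and convex, so the closed convex hull stays inside it and the decreasing-weight property is preserved; compactness of $\mathcal{M}_n$ then guarantees that the linear functional $a\mapsto\langle a,-s(\bfx)\rangle$ attains its supremum, while passing to the closed convex hull does not change the value of that supremum. The main obstacle, in my view, is the subadditivity step: making precise why the decreasing weights are not merely convenient but structurally forced, which is exactly what the Abel-summation decomposition into superadditive order-statistic sums clarifies.
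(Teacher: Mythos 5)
Your proof is correct, and its skeleton matches the paper's: reduce to Theorem~\ref{th:coherent_estimator_representation}, use a rearrangement argument to get decreasing weights, and prove subadditivity via Abel summation over partial sums of the smallest order statistics. The two places where your route genuinely diverges are both improvements in transparency. For necessity, the paper argues that the maximizing weight vectors in the representation of Theorem~\ref{th:coherent_estimator_representation} must \emph{already} be non-increasing, by picking for each $a^s$ a sample $\bfx$ at which it attains the supremum and invoking the ``similarly ordered'' criterion of Hardy--Littlewood--P\'olya; you instead symmetrize the dual set over $S_n$ (essentially Proposition~\ref{prop:MM} run in reverse) and then replace every weight vector by its decreasing rearrangement, which sidesteps any discussion of which elements attain the supremum and is insensitive to ties in $s(\bfx)$. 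For sufficiency, the paper routes the subadditivity of each $L$-functional $\langle a,-s(\cdot)\rangle$ through the coherence of $\widehat{\ES}^{1}_{i/n}$ established by the indicator-function computation of Example~\ref{ex:ES_coherent}, whereas you observe directly that $\sum_{i=1}^k x_{i:n}=\min_{|I|=k}\sum_{i\in I}x_i$ is a minimum of linear functionals and hence superadditive --- a one-line replacement for that computation; the Abel-summation decomposition with nonnegative increments $a_k-a_{k+1}$ is then identical to the paper's. Your closing step (passing to the closed convex hull inside the closed convex cone of decreasing vectors in $\mathcal{M}_n$, and using compactness for attainment) correctly delivers the ``moreover'' clause, which the paper inherits from Theorem~\ref{th:coherent_estimator_representation} rather than arguing separately.
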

\begin{proof}
First, we show that a coherent law-invariant risk estimator $\hat{\rho}_n$ admits the representation~\eqref{eq:law_inv_estimator_representation}. By Theorem~\ref{th:coherent_estimator_representation}, 
there exists a convex set $M^s_{\hat\rho_n}\subset \mathcal{M}_n$ such that $\hat\rho_n(\bfx)= \sup_{a\in M^s_{\hat\rho_n}}\langle a, - \bfx\rangle$, $\forall \mathbf{x}\in \mathbb{R}^n$, and the supremum is attained. Now, we show that the coordinates of $a\in M^s_{\hat\rho_n}$ must be non-increasing. Indeed, by the law-invariance of $\hat\rho_n$, 
\begin{equation}\label{eq:th:law_inv_est_representation:1}
       \hat\rho_n(\bfx)= \hat\rho_n(s(\bfx))=\sup_{a\in M^s_{\hat\rho_n}}\langle a,-s(\bfx)\rangle = \hat\rho_n(\sigma(\bfx)) = \sup_{a\in M^s_{\hat\rho_n}}\langle a,-\sigma(\bfx)\rangle,      
       \quad \mathbf{x}\in \mathbb{R}^n, \ \sigma\in S_n.
\end{equation}
Moreover, we can assume that $M^s_{\hat\rho_n}$ consists only of the elements for which the supremum in~\eqref{eq:coherent_estimator_representation} is attained.  
Hence, for any $a^s\in M^s_{\hat\rho_n}$ we can find $ \mathbf{x}\in \mathbb{R}^n$ such that $\sup_{a\in M^s_{\hat\rho_n}}\langle a,-s(\bfx)\rangle = \langle a^s,-s(\bfx)\rangle$. Then, by \eqref{eq:th:law_inv_est_representation:1}, for any $\sigma \in S_n$, we also have
   \[
   \langle -a^s,\sigma(\bfx)\rangle = \langle a^s,-\sigma(\bfx)\rangle\leq \sup_{a\in M^s_{\hat\rho_n}}\langle a,-\sigma(\bfx)\rangle = \langle a^s,-s(\bfx)\rangle=\langle -a^s,s(\bfx)\rangle.
   \]
From here, in view of \cite[Theorem 369]{HarLitPol1988}, we deduce that the coordinates of $-a^s$ and $s(\bfx)$ have 
the same monotonicity. Since the coordinates of $s(\bfx)$ are non-decreasing, same are the coordinates of $-a^s$. This concludes the proof of this part. 

Next, we show that the map defined in~\eqref{eq:law_inv_estimator_representation} for some fixed set  $M^s_{\hat\rho_n}\subset \mathcal{M}_n$ of vectors with non-increasing coordinates is a law-invariant CRE. The law-invariance property follows at once. As far as coherence, properties (E1)-(E4), we show here only the subadditivity property (E4), since the remaining properties  are straightforward to verify. For any $a\in M^s_{\hat\rho_n}$, using the monotonicity of the coordinates of $a$, we claim that there exists $b=(b_1,\ldots, b_n)\in \mathcal{M}_n$ such that, for any $\mathbf{x}\in \mathbb{R}^n$,  we have
\begin{equation}\label{eq:repr_sum_ES}
    \langle a, -s(\mathbf{x})\rangle = -\sum_{i=1}^n a_i x_{i:n} = \sum_{i=1}^n b_i \widehat{\ES}^{\rev{\textrm{scm}}}_{i/n}(\mathbf{x}),
\end{equation}
where, as in \eqref{eq:ES-param}, we have  $\widehat{\ES}^{\rev{\textrm{scm}}}_{i/n}(\mathbf{x})=-\frac{1}{i}\sum_{j=1}^{i}x_{j:n}$. 
Indeed, 
    \begin{align*}
        -\sum_{i=1}^n a_i x_{i:n} = -\sum_{i=1}^n \left(\sum_{j=i}^{n-1}(a_j-a_{j+1})+a_n \right)x_{i:n} = -\sum_{i=1}^{n-1} \left(\sum_{j=i}^{n-1}(a_j-a_{j+1}) \right)x_{i:n}-\sum_{i=1}^n a_n x_{i:n}. 
    \end{align*}
We note that $-\sum_{i=1}^n a_n x_{i:n} = a_n n \widehat{\ES}^{\rev{\textrm{scm}}}_{n/n}(\mathbf{x})$, and by changing the order of summation above, we also get
\begin{align*}
    -\sum_{i=1}^{n-1} \left(\sum_{j=i}^{n-1}(a_j-a_{j+1}) \right)x_{i:n} = -\sum_{i=1}^{n-1} (a_i-a_{i+1}) \sum_{j=1}^i x_{j:n} = \sum_{i=1}^{n-1} (a_i-a_{i+1}) i \widehat{\ES}^{\rev{\textrm{scm}}}_{i/n}(\mathbf{x}).
\end{align*}
Thus, setting $b_i:=(a_i-a_{i+1})i$, $i=1, \ldots, n-1$, and $b_n := na_n$ we obtain~\eqref{eq:repr_sum_ES}. We remark that $ b$ is independent of $\bfx$, and by direct calculation we also have $\sum_{i=1}^n b_i = \sum_{i=1}^n a_i = 1$. Thus,  by the monotonicity of $(a_i)$ we also obtain that  $b_i\geq 0$, so $b=(b_1, \ldots, b_n)\in \mathcal{M}_n$.
    
    By Example~\ref{ex:ES_coherent}, the map $\mathbf{x}\to\widehat{\ES}^{\rev{\textrm{scm}}}_{i/n}(\mathbf{x})$ is a CRE, for any $i$. Consequentially, for any $\mathbf{x}, \bfy\in\mathbb{R}^n$, using~\eqref{eq:repr_sum_ES}, we obtain
    \[
    \langle a, -s(\mathbf{x}+\mathbf{y})\rangle = \sum_{i=1}^n b_i \widehat{\ES}^{\rev{\textrm{scm}}}_{i/n}(\mathbf{x+y})\leq \sum_{i=1}^n b_i \widehat{\ES}^{\rev{\textrm{scm}}}_{i/n}(\mathbf{x})+\sum_{i=1}^n b_i \widehat{\ES}^{\rev{\textrm{scm}}}_{i/n}(\mathbf{y}) = \langle a, -s(\mathbf{x})\rangle+\langle a, -s(\mathbf{y})\rangle.
    \]
Finally, taking here the supremum over $a\in M_{\hat\rho_n}^s$, we deduce that  $\hat\rho_n(\mathbf{x}+\mathbf{y})\leq \hat\rho_n(\mathbf{x})+\hat\rho_n(\mathbf{y})$, which concludes the proof.
\end{proof}

\begin{remark}
The weights set $M^s_{\hat{\rho}_n}$  in the representation \eqref{eq:law_inv_estimator_representation}  is generally not unique. To provide an illustrative example, set $\hat\rho_n(\bfx):= -\min_i x_i$,  for $\mathbf{x}\in \mathbb{R}^n$. This is a law-invariant CRE since $\min_i x_i = x_{1:n}$. Now, let $M':=\{(1,0,\ldots, 0)\}$ and $M'':=\{(1-1/k,1/k,0, \ldots, 0)\colon k\in \mathbb{N}, k\geq 2\}$. Then, 
    $\hat\rho_n(\bfx)= \sup_{a\in M'}\langle a,-s(\bfx)\rangle = \sup_{a\in M''}\langle a,-s(\bfx)\rangle$, for any $\mathbf{x}\in \mathbb{R}^n$.
\end{remark}
The representation result in Theorem~\ref{th:law_inv_est_representation} is consistent with the corresponding result for law-invariant CRMs obtained in \cite{Kus2001}. However, this does not imply that a supremum over $L$-statistics should always be used when estimating a law-invariant CRM, since law-invariance of CREs depends both on the estimation method and on the underlying  CRM itself (cf. the comment following Definition~\ref{def:law_invariant_estimator}).

In the next section, we further examine the connection between $L$-estimators and CREs. In particular, we show that under comonotonicity, the supremum in Theorem~\ref{th:law_inv_est_representation} can be omitted. Before doing so, for completeness, we describe the relationship between the sets $M_{\hat\rho_n}^*$ and $M_{\hat\rho_n}^s$ from Theorems~\ref{th:coherent_estimator_representation} and \ref{th:law_inv_est_representation}, and present some illustrative examples.

\begin{proposition}[Link between robust representations for general and law-invariant CREs] \label{prop:MM}
    Let $\hat\rho_n\colon \bR^n\to \bR$ be a law-invariant CRE admitting representation $\hat\rho_n(\bfx)= \sup_{a\in M^s_{\hat\rho_n}}\langle a,-s(\bfx)\rangle$, where $\mathbf{x}\in \mathbb{R}^n$ and $M^s_{\hat\rho_n}\subset \mathcal{M}_n$ is such that $a_1\geq a_2\geq \ldots \geq a_n$ for $a\in M^s_{\hat\rho_n}$. Then, we have
    \[
    \hat\rho_n(\bfx) = \sup_{a\in M_{\hat\rho_n}^\sigma}\langle a,-\bfx\rangle, \quad \mathbf{x}\in \mathbb{R}^n,
    \]
where $M_{\hat\rho_n}^\sigma:=\{\sigma(a)\colon \sigma \in S_n, a\in M^s_{\hat\rho_n}\}$.
\end{proposition}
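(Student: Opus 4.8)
The plan is to reduce the claimed unsorted representation to the given sorted one by pushing the permutations through the inner product and invoking the rearrangement inequality. First I would rewrite the supremum over $M_{\hat\rho_n}^\sigma$ as a nested supremum,
\[
\sup_{a\in M_{\hat\rho_n}^\sigma}\langle a,-\bfx\rangle = \sup_{a\in M^s_{\hat\rho_n}}\ \sup_{\sigma\in S_n}\langle \sigma(a),-\bfx\rangle,
\]
which is immediate from the definition $M_{\hat\rho_n}^\sigma=\{\sigma(a)\colon \sigma\in S_n,\ a\in M^s_{\hat\rho_n}\}$. It therefore suffices to evaluate the inner supremum for a fixed $a\in M^s_{\hat\rho_n}$ and a fixed $\bfx\in\bR^n$.

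The key step is to note that $\langle\sigma(a),-\bfx\rangle = -\sum_{i=1}^n a_{\sigma(i)}x_i$, so maximizing over $\sigma\in S_n$ amounts to minimizing $\sum_{i=1}^n a_{\sigma(i)} x_i$ over all permutations. Since the coordinates of $a$ are non-increasing by assumption, the rearrangement inequality \cite{HarLitPol1988} shows that this sum is minimized by the oppositely-ordered pairing, i.e. by matching the largest weights with the smallest sample values; equivalently, by sorting $\bfx$ ascendingly against the descending $a$. Hence
\[
\sup_{\sigma\in S_n}\langle \sigma(a),-\bfx\rangle = -\sum_{i=1}^n a_i x_{i:n} = \langle a,-s(\bfx)\rangle.
\]

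Substituting this back yields
\[
\sup_{a\in M_{\hat\rho_n}^\sigma}\langle a,-\bfx\rangle = \sup_{a\in M^s_{\hat\rho_n}}\langle a,-s(\bfx)\rangle = \hat\rho_n(\bfx),
\]
which is the desired identity. I would also remark in passing that $M_{\hat\rho_n}^\sigma\subset\mathcal{M}_n$, since permutations preserve both nonnegativity and the unit-sum constraint, so that the right-hand side is genuinely a representation of the form appearing in Theorem~\ref{th:coherent_estimator_representation}. The only real content is the rearrangement step, and the main point to get right is the direction of the optimization: because of the minus sign, \emph{maximizing} the inner product corresponds to the \emph{minimizing} (oppositely-ordered) pairing in the rearrangement inequality, which is exactly the one producing $s(\bfx)$ against the descending weights $a$.
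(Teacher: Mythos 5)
Your proof is correct and follows essentially the same route as the paper: both decompose the supremum over $M_{\hat\rho_n}^\sigma$ into a nested supremum over $a\in M^s_{\hat\rho_n}$ and $\sigma\in S_n$, and both rely on the rearrangement inequality of Hardy--Littlewood--P\'olya to identify the optimal pairing of the descending weights with the ascending order statistics. The only cosmetic difference is that the paper establishes the identity via two separate inequalities (lower bound by choosing the sorting permutation, upper bound by rearrangement), whereas you evaluate the inner supremum in one step; the mathematical content is identical.
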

\begin{proof}
Note that, for any $\sigma \in S_n$, $a\in \mathcal{M}_n$ and $\mathbf{x}\in \mathbf{R}^n$, we have $\langle \sigma(a),\mathbf{x}\rangle = \langle a,\sigma^{-1}(\mathbf{x})\rangle$, where $\sigma^{-1}$ is the inverse permutation. Then, we obtain
\begin{align*}
    \sup_{a\in M_{\hat\rho_n}^\sigma}\langle a,-\bfx\rangle = \sup_{\substack{a\in M^s_{\hat\rho_n}\\\sigma \in S_n}}\langle \sigma(a),-\bfx\rangle = \sup_{\substack{a\in M^s_{\hat\rho_n}\\\sigma \in S_n}}\langle a,-\sigma^{-1}(\bfx)\rangle\geq \sup_{\substack{a\in M^s_{\hat\rho_n}}}\langle a,-s(\bfx) \rangle= \hat\rho_n(\mathbf{x}), 
\end{align*}
where the inequality follows from the fact that $s(\mathbf{x}) = \sigma_0^{-1}(\bfx)$ for some permutation $\sigma_0\in S_n$. On the other hand, using the rearrangement inequality \cite[Theorem 368]{HarLitPol1988}), for any $a\in M^s_{\hat\rho_n}$ and $\sigma \in S_n$, we obtain
\[
\langle a,-\sigma(\bfx)\rangle\leq \langle a,-s(\bfx)\rangle,
\]
which concludes the proof.
\end{proof}
\rev{Proposition~\ref{prop:MM} relates the two robust representations by showing that the weights for the unsorted sample in Theorem~\ref{th:coherent_estimator_representation} are obtained by permuting the sorted-sample weights from Theorem~\ref{th:law_inv_est_representation}, so that $M_{\hat\rho_n}^\sigma$ provides an explicit construction of the set $M_{\hat\rho_n}^*$.}

Now, we show \rev{examples of} the sets $M_{\hat\rho}$ for some specific families of risk measures and show how they are related to estimation formulas.

\begin{example}[Robust representation of \rev{the sample conditional mean} ES estimator]\label{ex:ES_representation}
 
 Let us fix $n\in\bN$, $\alpha\in (0,1)$, and consider a non-parametric estimator $\widehat{\ES}^{\rev{\textrm{scm}}}_{\alpha,n}$ defined in Example~\ref{ex:ES_coherent}, see \eqref{eq:ES-param}. Then, it is easy to show that $\widehat{\ES}^{\rev{\textrm{scm}}}_{\alpha,n}$ admits a law-invariant robust representation from Theorem~\ref{th:law_inv_est_representation} with the set
\[
    M^{s}_{\widehat{\ES}^{\rev{\textrm{scm}}}_{\alpha,n}}:=\left\{(a_1,\ldots, a_n)\colon a_i:=
        \frac{1}{\lfloor n\alpha \rfloor}\1_{\{i\leq \lfloor n\alpha \rfloor\}},\, i=1,2\ldots,n
    \right\}.
\]
\hfill $\square$
\end{example}

\begin{example}[Robust representation of CREs based on order statistics]\label{ex:ES_inequal_weights}
The weighting scheme introduced in Example~\ref{ex:ES_representation} that leads to estimator $\widehat{\ES}^{\rev{\textrm{scm}}}_{\alpha,n}$ could be modified. In particular, this could lead to alternative ES estimators such as $\widehat{\ES}^{\rev{\textrm{emp}}}_{\alpha,n}$ defined in \eqref{eq:ES_est_FRTB}.  Namely, for a fixed $n\in\bN$ and $\alpha\in (0,1)$, let us consider the risk estimator
 \begin{equation}\label{eq:ES-q_based}
         \widehat{R}^{q}_{\alpha,n}(\bfx):=-\sum_{i=1}^{\lfloor n\alpha \rfloor+1} q_i x_{i:n},
 \end{equation}
where a single $q:=(q_1, \ldots, q_{\lfloor n\alpha \rfloor+1}, 0,\ldots, 0)\in \mathcal{M}_n$ is fixed and such that $q_1\geq q_2\geq \ldots\geq q_{\lfloor n\alpha \rfloor+1}$. Then, from Theorem~\ref{th:law_inv_est_representation} with the supremum being the single element, we get that
\[
M^{s}_{\widehat{R}^{q}_{\alpha,n}}:=\{q\}
\] 
is a robust representation set for $\widehat{R}^{q}_{\alpha,n}$, and this measure is a law-invariant CRE. \rev{We refer to Section~\ref{S:ex3} for a practical discussion on the choice of $q$ for ES estimator.}
\qed\end{example}

\begin{example}[Robust representation of CREs based on suprema of order statistics]\label{ex:robust_ES}
The class of law-invariant CREs considered in Example~\ref{ex:ES_inequal_weights} could be further generalized by considering the suprema of weighted order statistics. Let us consider the risk estimator 
    \begin{equation}\label{eq:R.sup.3}
        \widehat{R}_{\alpha,n}(\bfx):=\sup_{q\in Q}\widehat{R}^q_{\alpha,n}(\bfx).
    \end{equation}
where $Q\subset \cM_n$ is such that any $q\in Q$ satisfies the same conditions as in Examples~\ref{ex:ES_inequal_weights}, and $\widehat{R}^q_{\alpha,n}$ is defined in \eqref{eq:ES-q_based}. Then, from Theorem~\ref{th:law_inv_est_representation}, we get that 
\[
M^{s}_{\widehat{R}_{\alpha,n}}:=Q
\]
is a robust representation set for $\widehat{R}_{\alpha,n}$, and this risk measure a law-invariant CRE. \hfill $\square$
\end{example}

In contrast to Example~\ref{ex:ES_representation} and Example~\ref{ex:ES_inequal_weights}, the order statistic weighting scheme in Example~\ref{ex:robust_ES} could depend on a sample realization, that is, different values of $q$ could attain a supremum in \eqref{eq:R.sup.3} for different samples $\bfx\in\bR^n$. 

To provide further illustration, let us consider a CRM that has a different structure than ES, \rev{namely the expectile value at risk ($\ExpVAR$)}, and study the dual representation of the corresponding plug-in CRE.

\begin{example}[Robust representation of \rev{the empirical ExpVaR} estimator]\label{ex:exp_var} 

In this example we consider expectile value at risk ($\ExpVAR$) family of risk measures indexed by a significance level $\alpha\in (0,1/2)$. This family identifies an important class of law-invariant risk measures which are both CRM and elicitable, see~\cite{BelBer2017,BelNegPya2019,EmbSchWan2022} for more details. The $\ExpVAR$ at significance level $\alpha\in (0,1/2)$ is given by
    \begin{equation}
        \ExpVAR_\alpha(X):=-\argmin_{c\in \mathbb{R}}\left(\alpha \mathbb{E}[(X-c)_+^2]+(1-\alpha)\mathbb{E}[(X-c)_-^2] \right),\quad X\in\cX,
    \end{equation}
    where $(b)_+:=\max(b,0)$ and $(b)_-:=\max(-b,0)$. For $\cX=L^1$, we have $\ExpVAR_\alpha(X)=-e_\alpha(X)$, where $e_\alpha(X)$ is the $\alpha$-expectile of $X$, that is, a unique solution to the equation  $\alpha \mathbb{E}[(X-e_\alpha(X))_+]-(1-\alpha)\mathbb{E}[(X-e_\alpha(X))_-]=0$. Using this representation and Theorem~\ref{th:plug_in_ECDF}, we can implicitly define an empirical $\ExpVAR$ estimator by setting
\begin{equation}\label{eq:ExpVAR.est}
\widehat\ExpVAR^{\rev{\textrm{emp}}}_{\alpha,n}(\mathbf{x}):=-\hat{e}_\alpha(\mathbf{x}),
\end{equation}
where the empirical expectile $\hat{e}_\alpha(\mathbf{x})$ is defined as a solution to equation
\begin{equation}\label{eq:expectile_emp}
        \alpha \frac{1}{n}\sum_{i=1}^n[x_i-\hat{e}_\alpha(\mathbf{x})]_+-(1-\alpha) \frac{1}{n}\sum_{i=1}^n[x_i-\hat{e}_\alpha(\mathbf{x})]_-=0.
\end{equation}
\rev{Note that the application of $\ExpVAR_\alpha$ to the empirical distribution replaces expectations by empirical averages, so the estimator \eqref{eq:ExpVAR.est} indeed coincides with the empirical estimator defined in \eqref{eq:empirical.plugin.estimator}.}  Now, let $n^*(\mathbf{x})$ be such that 
    \begin{equation}\label{eq:ExpVaR:n_star}
        n^\star(\mathbf{x}):=\sup\{k\in \{1,\ldots,n\}\colon  x_{k:n}\leq \hat{e}_\alpha(\mathbf{x})\}.
    \end{equation}
    Then, we can rewrite~\eqref{eq:expectile_emp} as
\[
    (1-\alpha) \sum_{i=1}^{n^\star(\mathbf{x})}(x_{i:n}-\hat{e}_\alpha(\mathbf{x}))+
        \alpha \sum_{i=n^\star(\mathbf{x})+1}^n(x_{i:n}-\hat{e}_\alpha(\mathbf{x}))=0.
\]
Hence, $\hat{e}_\alpha(\mathbf{x})$ satisfies
\[
\hat{e}_\alpha(\mathbf{x})= \frac{1-\alpha}{(1-2\alpha)n^\star(\mathbf{x})+n\alpha}\sum_{i=1}^{n^\star(\mathbf{x})} x_{i:n}+
        \frac{\alpha }{(1-2\alpha)n^\star(\mathbf{x})+n\alpha}\sum_{i=n^\star(\mathbf{x})+1}^n x_{i:n}.
\]
Consequently, $\widehat\ExpVAR_{\alpha,n}^{\rev{\textrm{emp}}}$ admits the following representation
\begin{equation}
    \widehat\ExpVAR^{\rev{\textrm{emp}}}_{\alpha,n}(\mathbf{x})=-\left(\sum_{i=1}^{n^\star(\mathbf{x})} \frac{1-\alpha}{(1-2\alpha)n^\star(\mathbf{x})+n\alpha}x_{i:n} +\sum_{i=n^\star(\mathbf{x})+1}^n \frac{\alpha}{(1-2\alpha)n^\star(\mathbf{x})+n\alpha}x_{i:n}\right)=\langle a^\star(\mathbf{x}),-s(\mathbf{x})\rangle,
\end{equation}
where  $a^\star_i(\mathbf{x}):=
    \frac{(1-\alpha)}{(1-2\alpha)n^\star(\mathbf{x})+n\alpha}$ for $i=1, \ldots, n^\star(\mathbf{x})$, and $
    a^\star_i(\mathbf{x}):=\frac{\alpha}{(1-2\alpha)n^\star(\mathbf{x})+n\alpha} $ for $ i=n^\star(\mathbf{x})+1, \ldots, n$. As we later illustrate in Example~\ref{ex:exp_var:2}, $a^\star(\mathbf{x})$ is different for different samples $\mathbf{x}$. Using the fact that expectile value at risk is coherent, we can also recover the robust representation of $\widehat\ExpVAR^{\rev{\textrm{emp}}}_{\alpha,n}$ from  Theorem~\ref{th:law_inv_est_representation}. Indeed, one can show that
\begin{equation}
    \widehat\ExpVAR^{\rev{\textrm{emp}}}_{\alpha,n}(\mathbf{x})=\sup_{a\in M^s_{\widehat\ExpVAR^{\rev{\textrm{emp}}}_{\alpha,n}}}\langle a, -s(\mathbf{x})\rangle,  
\end{equation}
    where $M^s_{\widehat\ExpVAR^{\rev{\textrm{emp}}}_{\alpha,n}}:=\{a^\star(\mathbf{x})\colon \mathbf{x}\in \mathbb{R}^n \}$.\hfill $\square$
\end{example}

\subsection*{Comonotonic CREs and their representation as L-estimators} \label{sec:Lest}
We recall that in statistical analysis an \textit{$L$-estimator} is a linear combination of the order statistics $(x_{i:n})_{i=1,\ldots,n}$, see~\cite[Section~22]{Vaart1998Book} or \cite{DavidNagaraja2003} for details. Thus, certain risk estimators -- such as the ES estimator $\widehat{\ES}^{\rev{\textrm{scm}}}_{\alpha,n}(\bfx)$ given in Example~\ref{ex:ES_coherent} or the $\var$ estimator $\widehat\var^{\rev{\textrm{emp}}}_{\alpha,n}(\bfx)$ given in Example~\ref{ex:var} -- are specific instances of $L$-statistics. More generally, in view of Theorem~\ref{th:law_inv_est_representation} and Proposition~\ref{prop:MM}, a law-invariant CRE is a supremum over a set of $L$-estimators. 
For any generic risk measure $\rho$, from both practical and computational perspectives, it is desirable for the set $M^s_{\hat\rho_n}$ in \eqref{eq:law_inv_estimator_representation} to be small -- ideally a singleton -- as is the case in Example~\ref{ex:ES_representation}, where $\widehat{\ES}^{\rev{\textrm{scm}}}_{\alpha, n}(\bfx)$ is represented via the singleton set $M^s_{\widehat{\ES}^{\rev{\textrm{scm}}}_{\alpha,n}}$. Indeed, for any $a\in\cM_n$ such that $a_1\geq a_2\geq\ldots\geq a_n$, the value $\langle a,-s(\bfx)\rangle = -\sum_{i=1}^n a_i x_{i:n}$ has a natural interpretation, since it could be seen as an empirical form of an average (or weighted) VaR, in which VaR estimates are represented by order statistics. This shows that such estimators are related to a large and important class of CRMs; cf.~\cite[Theorem 2.5]{Ace2002},  \cite[Section~4.4]{FolSch2016}, and Remark~\ref{rem:spectral.risk.measures} for more details. The aim of this section is to provide sufficient conditions for $M^s_{\hat\rho_n}$ to be a singleton, using the comonotonicity property.

 \rev{The risk measure $\rho$ is {\it comonotonic} (or {\it comonotonic additive}) if it satisfies the property $\rho(X+Y)=\rho(X)+\rho(Y)$ for comonotonic random variables $X$ and $Y$; recall that $X$ and $Y$ are said to be comonotonic if $(X(\omega)-X(\omega'))(Y(\omega)-Y(\omega')) \ge 0$ for all $\omega,\omega' \in \Omega$. This additive-type property reflects the principle that aggregation of perfectly aligned random  positions should not generate artificial risk reduction or amplification and it leads to a specific type of robust representation for law-invariant risk measures, see~\cite{Kus2001}. }
 
 \rev{To formulate and study this property in the context of risk estimators, we recall that two vectors $\mathbf{x}, \mathbf{y}\in \mathbb{R}^n$ are \textit{comonotonic} if $(x_i-x_j)(y_i-y_j)\geq 0$, for $i,j=1,\ldots, n$.
In other words, the coordinates of $\mathbf{x}$ and $\mathbf{y}$ are jointly increasing or decreasing. Similar to risk measures, comonotonicity extends to the risk estimation framework: the estimator should be additive for vectors with the same order, and perfectly aligned samples, when combined, should not artificially reduce or increase the estimated risk. }

\begin{definition}[Comonotonic estimator]\label{def:comonotonic_estimator}
A function $\hat\rho_n\colon \bR^n\to \bR$ is {\it comonotonic} if $ \hat\rho_n(\mathbf{x}+\mathbf{y}) = \hat\rho_n(\mathbf{x})+\hat\rho_n(\mathbf{y})$ for any comonotonic vectors $\bfx\in\bR^n$ and $\mathbf{y}\in \mathbb{R}^n$.  
\end{definition}
\noindent \rev{Note that if a law-invariant CRM $\rho$ is comonotonic, then its empirical estimator is also comonotonic; if vectors are comonotonic, then the induced random variables under the (empirical) discrete uniform measure are comonotonic, and the additivity property of $\rho$ transfers directly to the empirical estimator.} As we show next, for any comonotonic law-invariant CRE, the set $M^s_{\hat\rho_n}$ can be chosen as a singleton, a result that may be viewed as a version of \cite[Theorem~7]{Kus2001} adapted to CREs. 

\begin{theorem}[Robust  representation of comonotonic and law-invariant  CREs]\label{th:comonotonic_CRE}
    A risk estimator $\hat\rho_n\colon \bR^n\to \bR$ is a comonotonic law-invariant CRE if and only if there exists a unique $a=(a_1, \ldots, a_n)\in\mathcal{M}_n$ satisfying $a_1\geq a_2\geq \ldots \geq a_n$ and
\begin{equation}\label{eq:comonotonic_representation}
    \hat\rho_n(\bfx)= \langle a,-s(\bfx)\rangle, \quad \mathbf{x}\in \mathbb{R}^n.
\end{equation}
\end{theorem}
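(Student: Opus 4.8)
The plan is to prove both directions of the equivalence, leveraging the already-established Theorem~\ref{th:law_inv_est_representation}. The easier direction is the "if" part: suppose $\hat\rho_n(\bfx)=\langle a,-s(\bfx)\rangle$ for a single $a\in\cM_n$ with non-increasing coordinates. Since this is the special case of the representation \eqref{eq:law_inv_estimator_representation} in which $M^s_{\hat\rho_n}=\{a\}$ is a singleton, Theorem~\ref{th:law_inv_est_representation} immediately guarantees that $\hat\rho_n$ is a law-invariant CRE. It then only remains to verify comonotonicity. Here I would use the key observation that for comonotonic vectors $\bfx,\bfy$ the sorting is compatible, namely there is a single permutation realizing the ascending order of $\bfx$, $\bfy$, and $\bfx+\bfy$ simultaneously, so that $s(\bfx+\bfy)=s(\bfx)+s(\bfy)$. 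With this identity the additivity $\hat\rho_n(\bfx+\bfy)=\langle a,-s(\bfx)-s(\bfy)\rangle=\hat\rho_n(\bfx)+\hat\rho_n(\bfy)$ follows by linearity of the inner product.

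For the "only if" direction, I would start from the law-invariant representation $\hat\rho_n(\bfx)=\sup_{a\in M^s_{\hat\rho_n}}\langle a,-s(\bfx)\rangle$ afforded by Theorem~\ref{th:law_inv_est_representation}, where every $a\in M^s_{\hat\rho_n}$ has non-increasing coordinates and lies in $\cM_n$. The goal is to show that comonotonicity forces this supremum to collapse to a single weight vector. The strategy is to exploit comonotonic additivity on cleverly chosen test inputs to pin down a unique maximizer. A natural family of comonotonic probes is built from the sorted partial-sum (cumulative) directions: for each $k\in\{1,\ldots,n\}$ consider the vector whose sorted form is $(\underbrace{0,\ldots,0}_{n-k},1,\ldots,1)$ scaled appropriately, or more flexibly arbitrary sorted vectors with non-decreasing coordinates. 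Any two such sorted vectors are comonotonic with each other, so comonotonic additivity of $\hat\rho_n$ should translate into additivity of the map $s(\bfx)\mapsto \sup_{a}\langle a,-s(\bfx)\rangle$ on the convex cone of non-decreasing vectors.

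The heart of the argument—and the step I expect to be the main obstacle—is showing that a sublinear (positively homogeneous, subadditive) functional that is additive on the cone of ordered vectors must be linear there, hence given by a single supporting functional. Concretely, on the set $C:=\{\bfv\in\bR^n : v_1\le v_2\le\cdots\le v_n\}$, which is a convex cone closed under addition, the functional $g(\bfv):=\sup_{a\in M^s_{\hat\rho_n}}\langle a,-\bfv\rangle$ is the support function of the weight set and is additive by comonotonicity (since all elements of $C$ are pairwise comonotonic and sorting acts trivially on $C$). A sublinear functional that is additive on a generating cone is linear on that cone; the classical consequence is that the supporting set must reduce to a single point on the relevant directions. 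I would make this rigorous either by a separation/exposed-point argument—additivity means every direction in $C$ is exposed by the \emph{same} maximizer—or by directly extracting the weights: evaluate $g$ on the basis of cumulative vectors $\bfc^{(k)}$ with $s(\bfc^{(k)})=\bfc^{(k)}$ to read off $\sum_{i\le k} a_i$ for the maximizer, and use additivity to check these partial sums are realized by one and the same $a$. Uniqueness of $a$ then follows because the values $\langle a,-s(\bfx)\rangle$ on a spanning set of sorted inputs determine all coordinates of $a$, and the ordering constraint $a_1\ge\cdots\ge a_n$ together with $a\in\cM_n$ is inherited from Theorem~\ref{th:law_inv_est_representation}.

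I would close by confirming the ordering and normalization of the extracted $a$: non-increasing coordinates and $\sum_i a_i=1$ come directly from membership in $M^s_{\hat\rho_n}\subset\cM_n$ as established in Theorem~\ref{th:law_inv_est_representation}, so no separate verification is needed beyond noting that the singleton maximizer inherits these properties. The delicate point to handle with care is the behavior of the maximizer across sorted inputs with ties or with distinct orderings; restricting attention to the interior of $C$ (strictly increasing coordinates) and using continuity of $\hat\rho_n$—which follows from it being a finite sublinear, hence Lipschitz, functional on $\bR^n$—lets me extend the single-maximizer conclusion to all of $C$ and thus to all $\bfx\in\bR^n$ via law-invariance.
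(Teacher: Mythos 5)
Your proposal is correct in outline and the ``if'' direction coincides with the paper's (the paper phrases the key fact as comonotonic additivity of the order-statistic maps $\bfx\mapsto x_{i:n}$, which is exactly your identity $s(\bfx+\bfy)=s(\bfx)+s(\bfy)$ for comonotonic $\bfx,\bfy$). For the ``only if'' direction, however, you take a genuinely different route. The paper works with the maximizer sets $N_{\hat\rho_n}(\bfx):=\{a\in M^s_{\hat\rho_n}\colon \hat\rho_n(\bfx)=\langle a,-s(\bfx)\rangle\}$ and proves $\bigcap_{\bfx}N_{\hat\rho_n}(\bfx)\neq\emptyset$ by a finite-intersection-property/compactness argument on $\cM_n$: for $\bfx_1,\ldots,\bfx_K$ it sets $\bfx:=\sum_i s(\bfx_i)$, uses pairwise comonotonicity of sorted vectors to get $\hat\rho_n(\bfx)=\sum_i\hat\rho_n(\bfx_i)$, and then observes that any maximizer for $\bfx$ must simultaneously maximize for every $\bfx_i$; uniqueness is read off from the cumulative test vectors $(-1,0,\ldots,0)$, $(-1,-1,0,\ldots,0)$, etc. You instead propose to show directly that the support function $g(\bfv)=\sup_{a}\langle a,-\bfv\rangle$ is additive, hence linear, on the cone $C$ of sorted vectors, and to extract the single weight vector from the partial sums $g$ assigns to the cumulative directions. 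Both arguments hinge on the same observation (sorted vectors are pairwise comonotonic, so comonotonic additivity linearizes the supremum on $C$); the paper's version buys a clean existence proof without having to discuss extension of a linear functional from a cone, while yours is more constructive and essentially absorbs the paper's uniqueness step into the existence step. Two points you flag but leave unfinished would need care in a full write-up: (i) passing from additivity plus positive homogeneity on $C$ to the existence of a \emph{single} $a\in\cM_n$ with non-increasing coordinates representing $g$ on all of $C$ (here one should note that $C$ has nonempty interior, so the linear functional extends uniquely to $\bR^n$, and that the representing vector lies in the closed convex hull of $M^s_{\hat\rho_n}$, which preserves the ordering and normalization constraints); and (ii) the boundary/ties issue, which your continuity argument does handle since $\hat\rho_n$ is finite sublinear and hence Lipschitz. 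Neither is a gap in the idea, only in the execution.
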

\begin{proof}
$(\Leftarrow)$ Note that Theorem~\ref{th:coherent_estimator_representation} and Theorem~\ref{th:law_inv_est_representation} imply that $\hat\rho_n$ defined in~\eqref{eq:comonotonic_representation} is a law-invariant CRE. By the definition of comonotonicity, the functions $\bfx\mapsto x_{i:n}$,  with $i=1, \ldots,n$, are comonotonic. Thus, the map $\hat\rho_n$ is comonotonic as the (negative) convex combination of comonotonic functions.

\noindent $(\Rightarrow)$ Assume that $\hat{\rho}_n$ is a comonotonic law-invariant CRE, and let ${M}^s_{\hat\rho_n}$ be any representing set from Theorem~\ref{th:law_inv_est_representation}. We define
    \[
    {N}_{\hat\rho_n}(\mathbf{x}):=\{a\in {M}^s_{\hat\rho_n}\colon \hat\rho_n(\mathbf{x}) = \langle a,-s(\mathbf{x})\rangle\}, \quad \mathbf{x}\in \mathbb{R}^n.
    \]
In view of Theorem~\ref{th:law_inv_est_representation} and the continuity of the map $a\mapsto \langle a,-s(\mathbf{x})\rangle$, for any $\mathbf{x}\in \mathbb{R}^n$, the set ${N}_{\hat\rho_n}(\mathbf{x})$ is non-empty and closed. We show that
    \begin{equation}\label{eq:th:comonotonic_CRE:intersection}
        \bigcap_{\mathbf{x}\in \mathbb{R}^n}{N}_{\hat\rho_n}(\mathbf{x}) \neq \emptyset.
    \end{equation}
Then, any $a\in \bigcap_{\mathbf{x}\in \mathbb{R}^n}{N}_{\hat\rho_n}(\mathbf{x})$ satisfies~\eqref{eq:comonotonic_representation}.

To prove~\eqref{eq:th:comonotonic_CRE:intersection}, it is enough to show that for any $K\in \mathbb{N}$, $K\geq 2$, and $\mathbf{x}_1, \ldots, \mathbf{x}_K\in \mathbb{R}^n$ we have \begin{equation}\label{eq:th:comonotonic_CRE:intersection_finite}
    \bigcap_{i=1}^K {N}_{\hat\rho_n}(\mathbf{x}_i) \neq \emptyset.
    \end{equation}
Indeed, if~\eqref{eq:th:comonotonic_CRE:intersection_finite} holds and $\bigcap_{\mathbf{x}\in \mathbb{R}^n}{N}_{\hat\rho_n}(\mathbf{x}) = \emptyset$, then we have $\bigcup_{\mathbf{x}\in \mathbb{R}^n}(\cM_n\setminus {N}_{\hat\rho_n}(\mathbf{x}) ) = \cM_n$. However, since $\cM_n$ is compact and any $\cM_n\setminus {N}_{\hat\rho_n}(\mathbf{x})$ is open, we may find $\mathbf{x}_1, \ldots, \mathbf{x}_K\in \mathbb{R}^n$ such that $\bigcup_{i=1}^K(\cM_n\setminus {N}_{\hat\rho_n}(\mathbf{x}_i) ) = \cM_n$, which contradicts~\eqref{eq:th:comonotonic_CRE:intersection_finite}. 
Thus, to show~\eqref{eq:comonotonic_representation}, it is enough to show~\eqref{eq:th:comonotonic_CRE:intersection_finite}. Hence, let $K\in \mathbb{N}$, $K\geq 2$, $\mathbf{x}_1, \ldots, \mathbf{x}_K\in \mathbb{R}^n$, and let us define $\mathbf{x}:=\sum_{i=1}^K s(\mathbf{x}_i)$. Also, note that for any $k=1, \ldots, K-1$, the vectors $\sum_{i=1}^k s(\mathbf{x}_i)$ and $s(\mathbf{x}_{k+1})$ are comonotonic. By comonotonicity and law-invariance, we inductively get
\begin{equation}\label{eq:th:comonotonic_CRE:1}
    \textstyle\hat\rho_n(\mathbf{x}) = \sum_{i=1}^K \hat\rho_n(s(\mathbf{x}_i)) = \sum_{i=1}^K \hat\rho_n(\mathbf{x}_i).
\end{equation}
Next, let $a\in {N}_{\hat\rho_n}(\mathbf{x})$, and since $s(\mathbf{x})=\mathbf{x}$, we deduce
\[
   \textstyle \hat\rho_n(\mathbf{x}) = \langle a,-s(\mathbf{x})\rangle = \langle a,-\mathbf{x}\rangle = \sum_{i=1}^K \langle a,-s(\mathbf{x}_i)\rangle.
\]
 Next, note that from ${N}_{\hat\rho_n}(\mathbf{x})\subset {M}^s_{\hat\rho_n}$, we have  $\hat\rho_n(\mathbf{x}_i)\geq \langle a,-s(\mathbf{x}_i)\rangle$. In fact, recalling~\eqref{eq:th:comonotonic_CRE:1}, we obtain $\hat\rho_n(\mathbf{x}_i)= \langle a,-s(\mathbf{x}_i)\rangle$ for any $i=1, \ldots, K$. Thus, $a\in {N}_{\hat\rho_n}(\mathbf{x}_i)$ for any $i=1, \ldots, K$, which concludes the proof of~\eqref{eq:comonotonic_representation}.

Finally, we show that $a$ from~\eqref{eq:comonotonic_representation} is unique. Let $a^1,a^2\in \mathcal{M}_n$ be such that
\[
    \hat\rho_n(\mathbf{x}) = \langle a^1,-s(\bfx)\rangle=\langle a^2,-s(\bfx)\rangle, \quad \mathbf{x}\in \mathbb{R}^n.
\]
    Then, setting $\mathbf{x}:=(-1,0,\ldots, 0)$ we obtain $a_1^1 = \hat\rho_n(x) = a_1^2 $. Next, setting $\mathbf{x}:=(-1,-1,\ldots, 0)$, we get
    $a_1^1+a_2^1 = \hat\rho_n(x) = a_1^2+a_2^2 $, so $a_2^1 = a_2^2$. Thus, we inductively obtain $a_i^1=a_i^2$, $i=1, \ldots, n$, which concludes the proof.
\end{proof}
\rev{Practically, Theorem~\ref{th:comonotonic_CRE} implies that, within the class of law-invariant comonotonic CRMs, the set over which the supremum is taken in the robust representation can be reduced to a singleton, so that Kusuoka’s representation is directly mirrored at the estimation level. Importantly, this also ensures computational tractability, as it reduces the computation of the risk estimator to a single L-estimator.}

We conclude this section with two examples. In the first example, we recall the usual way of estimating spectral risk measures and show that the corresponding risk estimators are comonotonic and law-invariant CREs, while in the second example we present a numerical illustration that one cannot find unique weights for non-comonotonic risk measure estimators.

\begin{example}[\rev{Spectral} CRE for spectral risk measures]\label{ex:spectral_RM} As stated in Section~\ref{sec:preliminaries}, the class of WVaR risk measures could be represented using \textit{spectral risk measures}, see~\cite{Ace2002} for details. A spectral risk measure is given by
\begin{equation}\label{eq:spectral_rm}
\textstyle    \rho(X) = -\int_0^1 \var_{\alpha}(X) \phi(\alpha) d\alpha,
\end{equation}
where the \textit{risk spectrum} $\phi\colon [0,1]\to \mathbb{R}_+$ is (weakly) decreasing, bounded, and $\int_0^1 \phi(t)dt= 1$. In order to estimate \eqref{eq:spectral_rm}, we can consider the discretised version of the risk spectrum. Namely, for any $n>1$, set $a_{i,n}:=\int_{\frac{i-1}{n}}^{\frac{i}{n}} \phi(s)ds$, $i=1,\ldots, n$, and consider the risk estimator given by
\begin{equation}\label{eq:estimator_representation}
\textstyle    \hat\rho^a_n(\mathbf{x}) = -\sum_{i=1}^n a_{i,n} x_{i:n}.
\end{equation}
Clearly, due to the properties of the risk spectrum, we have $a_{i,n} \geq a_{i+1,n}$, for $i=1,\ldots,n-1$, and $\hat{\rho}_n^a$ admits representation \eqref{eq:comonotonic_representation}, so that it is a law-invariant and comonotonic CRE. This CRE could be seen as a natural non-parametric plug-in estimator of the corresponding spectral risk measure \eqref{eq:spectral_rm}, similar to the CRE discussed in Theorem~\ref{th:plug_in_ECDF}. In the next section, we establish further important properties of this estimator.\hfill $\square$
\end{example}

\begin{example}[Non-comonotonicity of \rev{the empirical} ExpVaR estimator]\label{ex:exp_var:2} 
Let $\widehat\ExpVAR_{\alpha,n}^{\rev{\textrm{emp}}}$ be the law-invariant CRE defined in \eqref{eq:ExpVAR.est}. Non-comonotonicity of this estimator follows from Theorem~\ref{th:comonotonic_CRE}  and Example~\ref{ex:exp_var}, where the maximizer has been shown to be dependent on the sample. For completeness, let us now numerically illustrate the non-comonotonicity of $\widehat\ExpVAR_{\alpha,n}^{\rev{\textrm{emp}}}$. Let $\alpha=1/4$, $\mathbf{x}:=(1,2,3)$, and $\mathbf{y}:=(0,0,1)$. Clearly, $\mathbf{x}$ and $\mathbf{y}$ are comonotonic. Also, routine calculations show
\[
\widehat\ExpVAR^{\rev{\textrm{emp}}}_{1/4,3}(\mathbf{x}):=1.6, \quad \widehat\ExpVAR^{\rev{\textrm{emp}}}_{1/4,3}(\mathbf{y})\approx 0.1429, \quad \widehat\ExpVAR^{\rev{\textrm{emp}}}_{1/4,3}(\mathbf{x}+\mathbf{y})=1.8,
\]
which directly shows non-comonotonicity as $\widehat\ExpVAR^{\rev{\textrm{emp}}}_{1/4,3}(\mathbf{x}+\mathbf{y})\neq \widehat\ExpVAR^{\rev{\textrm{emp}}}_{1/4,3}(\mathbf{x})+\widehat\ExpVAR^{\rev{\textrm{emp}}}_{1/4,3}(\mathbf{y})$.    \hfill $\square$
\end{example}

\begin{remark}[Robust representation weights for CRE and risk spectrum]\label{rem:spectral.risk.measures}
In Example~\ref{ex:spectral_RM}, we illustrated the inherent relationship between the risk spectrum in the spectral representation of CRMs and the structure of estimation weights in the robust representation of CREs. Specifically, the vectors $a \in \mathcal{M}_n$ defined in Theorem~\ref{th:comonotonic_CRE} and Theorem~\ref{th:law_inv_est_representation} can be interpreted as approximations of risk spectra: they mimic the weakly decreasing, bounded, unit-integral properties and are applied to order statistics, which approximate empirical quantiles, i.e., $\var$ at different significance levels. That said, the link does not amount to a strict equivalence, since plug-in spectral estimators for CRMs rely on empirical quantile representations, whereas risk spectra may also be estimated via alternative approximation schemes; cf. Example~\ref{ex:spectral_RM} and Example~\ref{es:con.alternative.spectral}. 
\end{remark}

\section{Consistency of CREs for i.i.d. samples}\label{sec:consistencyCRE}

In this section, we focus on the problem how to generate a sequence of CREs $\hat\rho_n(\mathbf{X}_n)$ that approximates a given CRM $\rho(X)$, \rev{where $\mathbf{X}_n:=(X_1,X_2,\ldots,X_n)$, for $n\in\bN$ and $\mathbf{X}:=(X_1,X_2,\ldots,)$ is an i.i.d. sample from the distribution of $X\in \mathcal{X}$.} We start by stating the definition of consistent risk estimators.

\begin{definition}[Consistent estimator]
    A sequence of risk estimators $(\hat\rho_n)_{n=1}^\infty$ is \textit{consistent} for a risk measure $\rho\colon \mathcal{X}\to \mathbb{R}\cup\set{+\infty}$ if, for any $X\in \mathcal{X}$ such that $\rho(X)<+\infty$,  and i.i.d. sample $\bfX$ from the distribution of $X$, we have
    \begin{equation}\label{eq:fin:consistency}
    \hat\rho_n(\bfX_n)\xrightarrow{\rev{\bP}}\rho(X), \quad n\to\infty,
    \end{equation}
where $\xrightarrow{\rev{\bP}}$ stands for the \rev{convergence in probability with respect to $\bP$. A sequence $(\hat\rho_n)_{n=1}^\infty$ is {\it strongly consistent} if the convergence in \eqref{eq:fin:consistency} holds almost surely.}
\end{definition}

\begin{remark}[Consistency under ergodic stationarity]
\rev{For simplicity, throughout this work we assume that $\bfX$ is an i.i.d.\ sequence. However, the notion of consistency of estimators naturally extends to broader classes of random sequences with intertemporal dependence. In particular, we conjecture that the definition of consistency, together with Proposition~\ref{pr:consistent_estimator}, can be established under suitable ergodic stationarity assumptions, see \cite[Section 2]{Hay2011}. This typically encompasses canonical time series models that admit the strong mixing property, such as autoregressive (AR), moving average (MA), or generalized autoregressive conditional heteroskedastic (GARCH) processes. A rigorous treatment of these extensions in the context of risk estimation lies beyond the scope of the present paper and will be pursued in future work.}
\end{remark}

As the next result shows,  consistency of the risk estimators preserves coherence of the limiting risk measure. 

\begin{proposition}[CREs consistent limit leads to CRM]\label{pr:consistent_estimator}
    Suppose that there exists a consistent sequence of CREs $(\hat\rho_n)_{n=1}^\infty$ for $\rho\colon \mathcal{X}\to \mathbb{R}\cup\set{+\infty}$. Then, $\rho$ is a law-invariant CRM.
\end{proposition}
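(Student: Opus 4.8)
The plan is to transfer each of (R1)--(R4) and law-invariance from the estimators $\hat\rho_n$ to the limit $\rho$, using that every $\hat\rho_n$ satisfies (E1)--(E4) and that the defining convergence is almost sure. The single recurring tool is elementary: if two sequences of random variables are identically distributed for each $n$ and each converges almost surely to a deterministic constant, then the two constants coincide, because almost sure convergence implies convergence in distribution and weak limits are unique. Throughout, for a single input I sample an i.i.d. sequence $X_1,X_2,\ldots$ from the law of $X$, whereas for the two-input properties I sample an i.i.d. sequence of \emph{pairs} $(X_1,Y_1),(X_2,Y_2),\ldots$ from the joint law of $(X,Y)$; then the marginal samples $(X_1,\ldots,X_n)$ and $(Y_1,\ldots,Y_n)$ are i.i.d. from $X$ and $Y$, while any pathwise relation between the inputs is preserved coordinatewise.

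First I would establish \emph{law-invariance}: if $F_X=F_Y$, the vectors $(X_1,\ldots,X_n)$ and $(Y_1,\ldots,Y_n)$ have the same law, so $\hat\rho_n(\bfX_n)$ and $\hat\rho_n(\bfY_n)$ are identically distributed for each $n$; since both converge almost surely to the constants $\rho(X)$ and $\rho(Y)$, the recurring tool gives $\rho(X)=\rho(Y)$. Cash-additivity (R2) and positive homogeneity (R3) follow the same single-sample pattern: the map $\bfx\mapsto\bfx+m$ (resp. $\bfx\mapsto\lambda\bfx$) sends an i.i.d. sample of $X$ to an i.i.d. sample of $X+m$ (resp. $\lambda X$), so applying (E2) (resp. (E3)) and letting $n\to\infty$ yields $\rho(X+m)=\rho(X)-m$ (resp. $\rho(\lambda X)=\lambda\rho(X)$), the degenerate case $\lambda=0$ being covered by $\rho(0)=0$.

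For \emph{monotonicity} (R1) and \emph{subadditivity} (R4) I would use the joint sampling above. If $X\geq Y$, the coupling gives $X_i\geq Y_i$ for all $i$ almost surely, so (E1) yields the pathwise bound $\hat\rho_n(\bfX_n)\leq\hat\rho_n(\bfY_n)$, which passes to the almost sure limit to give $\rho(X)\leq\rho(Y)$. Likewise, $(X_i+Y_i)_i$ is an i.i.d. sample of $X+Y$, and (E4) gives $\hat\rho_n(\bfX_n+\bfY_n)\leq\hat\rho_n(\bfX_n)+\hat\rho_n(\bfY_n)$ pathwise, which survives the limit to produce $\rho(X+Y)\leq\rho(X)+\rho(Y)$.

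The main obstacle, and the one genuinely delicate point, is the bookkeeping of the value $+\infty$, since consistency is postulated only where $\rho$ is finite, so the limit arguments apply verbatim only to finite-valued inputs. On the finite part of the domain they deliver (R1)--(R4) and law-invariance as above; the infinite cases must then be treated property by property. For positive homogeneity with $\lambda>0$ the relation is invertible, so applying the finite-case identity to $\lambda X$ with scalar $1/\lambda$ shows $\rho(\lambda X)<+\infty$ if and only if $\rho(X)<+\infty$ and fixes the value in all cases; for monotonicity and subadditivity the cases with an infinite right-hand side are vacuous; and for law-invariance the equality $F_X=F_Y$ makes the two estimation sequences identically distributed, which precludes a mismatch in their limiting behaviour. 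I regard assembling these edge cases cleanly, rather than the pathwise-then-limit core, as the part requiring the most care.
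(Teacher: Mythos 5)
Your core argument coincides with the paper's proof: for the two-argument properties you draw i.i.d.\ pairs from the joint law so that (E1) and (E4) give pathwise inequalities that pass to the almost sure limit, and for law-invariance you use that the two estimation sequences are forced to the same constant limit; this is exactly the coupling argument the paper uses for subadditivity and law-invariance, with cash additivity and positive homogeneity handled by the same single-sample transformations the paper leaves to the reader.

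The one place where you go beyond the paper---the bookkeeping of $+\infty$---is also the one place where your argument does not close. Dismissing ``the cases with an infinite right-hand side'' as vacuous leaves open the genuinely problematic configuration of a finite right-hand side with an infinite left-hand side: $X\geq Y$ with $\rho(Y)<+\infty$ but $\rho(X)=+\infty$, or $\rho(X),\rho(Y)<+\infty$ but $\rho(X+Y)=+\infty$. Consistency, as defined, constrains $\rho$ only at arguments where it is finite, so it gives no link between $\hat\rho_n(\bfX_n)$ and $\rho(X)$ in that configuration: the pathwise bound $\limsup_n\hat\rho_n(\bfX_n)\leq\rho(Y)$ says nothing about the value $\rho(X)=+\infty$ you would need to exclude. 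Indeed, taking $\hat\rho_n$ to be the negative sample mean and $\rho$ equal to $\mathbb{E}[-X]$ except at a single point where it is redefined to be $+\infty$ produces a consistent sequence of CREs whose target is neither monotone nor law-invariant, so the extended-real-valued statement genuinely requires either restricting to finite-valued $\rho$ or an extra hypothesis on the effective domain. To be fair, the paper's own proof has the identical lacuna (it writes $\rho(X+Y)=\lim_n\hat\rho_n(\bfZ_n)$ without first checking $\rho(X+Y)<+\infty$), so on the finite part of the domain your proof is complete and matches the paper's; but since you flagged the $+\infty$ cases as the delicate point, you should either say explicitly that they cannot all be repaired from the stated definition of consistency, or restrict the claim accordingly.
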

\begin{proof}
    To show that $\rho$ is CRM, we only show the subadditivity condition; the remaining properties are proved similarly or are straightforward \rev{by using monotonicity and the continuous mapping theorem}. Let $(X_i,Y_i)_{i=1}^n$ be an i.i.d. bivariate sample from $(X,Y)\in\cX\times \cX$. Then, $(Z_i)_{i=1}^n$ with $Z_i=X_i+Y_i$ is an i.i.d. sample from $X+Y$ and using the consistency and the coherence of $\hat\rho_n$, we have
    \[
    \rho(X+Y)=\rev{\bP-\!}\lim_{n\to\infty}\hat\rho_n(\bfZ_n)\leq \rev{\bP-\!} \lim_{n\to\infty}\left(\hat\rho_n(\bfX_n)+\hat\rho_n(\bfY_n)\right) = \rho(X)+\rho(Y),
    \]
    where we set $\bfX_n:=(X_1, \ldots, X_n)$, $\bfY_n:=(Y_1, \ldots, Y_n)$, $\bfZ_n:=(Z_1, \ldots, Z_n)$. 
    
    To prove law-invariance, let $X,Y\in \mathcal{X}$ that have the same distribution. Then, an  i.i.d. sample $\bfX_n=(X_i)_{i=1}^n$ from $X$ is also an i.i.d. sample from $Y$, and by consistency, we have
    \[
    \rho(X)=\rev{\bP-\!}\lim_{n\to\infty}\hat\rho_n(\bfX_n) = \rho(Y),
    \]
    which concludes the proof.
\end{proof}

Note that in Proposition~\ref{pr:consistent_estimator} we do not require $\hat\rho_n$ to be law-invariant to get the law-invariance of $\rho$.  Also from Proposition~\ref{pr:consistent_estimator}, we observe that if $\rho$ is not coherent, then there is no consistent sequence of CREs for $\rho$.

Following the discussion in Section~\ref{sec:Lest}, our goal is to find CREs represented as an $L$-estimator that are consistent. As the next result show, there is a strong connection between consistency of $L$-estimators and spectral risk measures discussed in Example~\ref{ex:spectral_RM}. Recall~\eqref{eq:spectral_rm} for the definition of a spectral risk measure with the risk spectrum $\phi$.

\begin{theorem}[Consistency of CREs based on \rev{risk spectrum}]\label{th:consistency}
    Let $\rho$ be a spectral risk measure with the risk spectrum $\phi$. Let  $\hat\rho_n, \ n>1$, be a risk estimator given by 
\begin{equation}\label{eq:estimator_representation2}
\textstyle    \hat\rho_n(\mathbf{x}) = -\sum_{i=1}^n a_{i,n} x_{i:n},
\end{equation}
where $a^n:=(a_{1,n}, \ldots, a_{n,n})\in \mathcal{M}_n$ with $a_{1,n}\geq a_{2,n}\geq \ldots \geq a_{n,n}$. Put $\phi_n(t):=\sum_{i=1}^n na_{i,n} \1_{\{t\in (\frac{i-1}{n},\frac{i}{n}]\}}$, for $n>1$, $t\in [0,1]$, and assume that $\sup_{n}\sup_{t\in [0,1]}\phi_n(t)<\infty$. Then, the following conditions are equivalent:
    \begin{enumerate}
        \item $\hat\rho_n$ is a \rev{strongly} consistent estimator of $\rho$ on $\cX=L^1$.
        \item For any $t\in (0,1)$, we have $\int_0^t \phi_n(s)ds\to \int_0^t \phi(s)ds$, as $n\to\infty$.
    \end{enumerate}
\end{theorem}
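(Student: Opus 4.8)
The plan is to recast both the estimator and its target as integrals of the respective risk spectra against quantile functions, and then to isolate the two sources of error: the fluctuation of the empirical quantiles and the discrepancy between $\phi_n$ and $\phi$. Writing $\hat F_{\bfx}^{-1}$ for the empirical quantile function of a sample, the identity $\hat F_{\bfx}^{-1}(s)=x_{i:n}$ for $s\in(\tfrac{i-1}{n},\tfrac{i}{n}]$ together with $\int_{(i-1)/n}^{i/n}\phi_n=a_{i,n}$ gives $\hat\rho_n(\bfx)=-\int_0^1\phi_n(s)\hat F_{\bfx}^{-1}(s)\,ds$, while $\var_\alpha(X)=-F_X^{-1}(\alpha)$ yields $\rho(X)=-\int_0^1\phi(s)F_X^{-1}(s)\,ds$. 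For an i.i.d. sample $\bfX_n$ I would then study
\[
\hat\rho_n(\bfX_n)-\rho(X)=-\underbrace{\int_0^1\phi_n(s)\bigl(\hat F_n^{-1}(s)-F_X^{-1}(s)\bigr)\,ds}_{(\mathrm{I})}-\underbrace{\int_0^1\bigl(\phi_n(s)-\phi(s)\bigr)F_X^{-1}(s)\,ds}_{(\mathrm{II})},
\]
where $\hat F_n^{-1}$ is the empirical quantile function of $\bfX_n$. Throughout I set $\Phi_n(t):=\int_0^t\phi_n$ and $\Phi(t):=\int_0^t\phi$, so that condition (2) reads $\Phi_n(t)\to\Phi(t)$ for all $t\in(0,1)$, and I note that $C:=\sup_n\|\phi_n\|_\infty<\infty$ makes every $\Phi_n$ Lipschitz with the same constant $C$.

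For the direction (2)$\Rightarrow$(1), term $(\mathrm{I})$ is handled probabilistically: the uniform bound gives $|(\mathrm{I})|\le C\int_0^1|\hat F_n^{-1}-F_X^{-1}|\,ds$, and the latter integral is exactly the $1$-Wasserstein distance between the empirical and true laws, which tends to $0$ almost surely for $X\in L^1$ by the strong law of large numbers. Term $(\mathrm{II})$ is deterministic, and here condition (2) enters. I would integrate by parts to rewrite $(\mathrm{II})=-\int_0^1(\Phi_n-\Phi)\,dF_X^{-1}$, the boundary terms vanishing since $\Phi_n$ and $\Phi$ agree at the endpoints ($=0$ at $0$, $=1$ at $1$, using $a^n\in\mathcal M_n$) while $sF_X^{-1}(s)\to0$ and $(1-s)F_X^{-1}(s)\to0$ as a consequence of $X\in L^1$. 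Condition (2) gives $\Phi_n-\Phi\to0$ pointwise on $(0,1)$, and the uniform estimate $|\Phi_n(s)-\Phi(s)|\le(C+\|\phi\|_\infty)\min(s,1-s)$ supplies a dominating function that is integrable against the positive measure $dF_X^{-1}$ (again by the endpoint decay forced by $X\in L^1$); dominated convergence then yields $(\mathrm{II})\to0$, and the two estimates combine to give almost-sure consistency.

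For the converse (1)$\Rightarrow$(2), I would probe $\phi_n$ at each level $t_0\in(0,1)$ with a two-point law: take $X$ with $\bP(X=-1)=t_0$ and $\bP(X=0)=1-t_0$, which is bounded, hence in $L^1$ with $\rho(X)<\infty$. A direct computation gives $\rho(X)=\Phi(t_0)$, while the order statistics of the sample equal $-1$ exactly $N_n$ times, with $N_n\sim\mathrm{Binomial}(n,t_0)$, so that $\hat\rho_n(\bfX_n)=\Phi_n(N_n/n)$. Consistency forces $\Phi_n(N_n/n)\to\Phi(t_0)$ almost surely; since $N_n/n\to t_0$ almost surely and $\Phi_n$ is Lipschitz uniformly in $n$, we have $|\Phi_n(N_n/n)-\Phi_n(t_0)|\le C|N_n/n-t_0|\to0$, and because $\Phi_n(t_0)$ is a deterministic sequence this upgrades to the ordinary convergence $\Phi_n(t_0)\to\Phi(t_0)$, which is exactly condition (2).

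The step I expect to be the main obstacle is the endpoint analysis in term $(\mathrm{II})$: because $F_X^{-1}$ need not be bounded near $0$ or $1$, both the vanishing of the boundary terms in the integration by parts and the integrability of the dominating function hinge on the sharp decay $sF_X^{-1}(s)\to0$ (and its mirror at $1$), which is precisely where the $L^1$ assumption is used nontrivially. By contrast, the Wasserstein control of $(\mathrm{I})$ and the two-point probing argument in the converse are comparatively routine once this integrability issue is resolved.
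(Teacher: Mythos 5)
Your proof is correct, but it takes a genuinely different route from the paper: the paper's entire proof is a one-line citation to van Zwet's strong law for linear functions of order statistics (Corollary~2.1 of \cite{Zwe1980}, with $J_N=\phi_N$, $J=\phi$, $g=F_X^{-1}$), whereas you give a self-contained argument. Your decomposition into a stochastic term (controlled by the a.s.\ convergence $W_1(\hat F_n,F_X)\to 0$, valid precisely for $X\in L^1$) and a deterministic bias term (handled by integrating by parts against $d F_X^{-1}$ and dominating $|\Phi_n-\Phi|$ by $(C+\|\phi\|_\infty)\min(s,1-s)$, which is $dF_X^{-1}$-integrable by Fubini and $X\in L^1$) is sound, and your two-point probing construction for $(1)\Rightarrow(2)$ is a clean explicit argument that the cited reference only supplies implicitly (``and the subsequent discussion''). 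What the paper's approach buys is brevity and access to sharper weight conditions already catalogued in the $L$-statistics literature; what yours buys is transparency about where each hypothesis enters --- $L^1$ drives both the Wasserstein convergence and the endpoint decay $sF_X^{-1}(s)\to 0$, while $\sup_n\|\phi_n\|_\infty<\infty$ gives the uniform Lipschitz bound on $\Phi_n$ used in both directions. Two small points to tighten: the a.s.\ $W_1$-convergence of empirical measures is a theorem in its own right (it needs Glivenko--Cantelli plus uniform integrability of the tails, not just the bare strong law, so cite it as such), and in the integration by parts you should state explicitly that $dF_X^{-1}$ is a positive, possibly infinite, Borel measure on $(0,1)$ against which $\min(s,1-s)$ is integrable --- you correctly flag this as the crux, and it does go through.
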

\begin{proof}
The claim follows from~\cite{Zwe1980}, Corollary~2.1 and the subsequent discussion, by setting $J_N = \phi_N$, $J = \phi$, and $g = F_X^{-1}$.
\end{proof}

\rev{The estimator's strong consistency assumption in the first condition of  Theorem~\ref{th:consistency} could be relaxed. Namely, if one assumes that the limit in the second condition exists, then one can replace strong consistency with consistency; it is sufficient to note that a sequence convergent in probability has an almost sure convergent subsequence.} Alternative conditions to the uniform boundedness of the sequence $(n a_n)$ from Theorem~\ref{th:consistency} can be found in the extensive literature on the consistency of $L$-estimators; for a comprehensive review, we refer the reader to~\cite{AarBur1996, MiaMa2021}, \cite[Chapter 8]{Ser1980}, and \cite{Mas1982}.

Next we consider an example of  risk estimator that satisfies the assumptions of Theorem~\ref{th:consistency}.

\begin{example}[Consistency of \rev{spectral} CRE]\label{ex:consistenty.srm}
    Let $\rho$ be a spectral risk measure with the risk spectrum $\phi$, and let $\hat{\rho}^a_n$ be its CRE defined in Example~\ref{ex:spectral_RM}. We claim that this estimator is strongly consistent for $\rho$. Indeed, for any $n>1$ and $i=1,\ldots, n$ we have $n a_{i,n} = n\int_{\frac{i-1}{n}}^{\frac{i}{n}} \phi(s)ds \leq n \frac{1}{n}\Vert \phi\Vert_1 = \Vert \phi\Vert_1$.
    Thus, setting $\phi_n(t):=\sum_{i=1}^n na_{i,n} \1_{\{t\in (\frac{i-1}{n},\frac{i}{n}]\}}$, we obtain
    \[
    \sup_{n}\sup_{t\in [0,1]}\phi_n(t) = \sup_{n}\sup_{i=1,\ldots, n} na_{i,n}\leq \Vert \phi\Vert_1<\infty.
    \]
    Also, for any $t\in (0,1)$, we deduce
    \[
    \int_0^t \phi_n(s)ds = n \sum_{i=1}^{[tn]} a_{i,n}\frac{1}{n}+n a_{[tn],n}\left(t-\frac{[tn]}{n}\right) = \int_0^{\frac{[tn]}{n}}\phi(s)ds + (tn-[tn])\int_{\frac{[tn]-1}{n}}^{\frac{[tn]}{n}} \phi(s)ds \to \int_0^t \phi(s)ds, \quad n\to\infty.
    \]
    Then, by Theorem~\ref{th:consistency}, strong consistency of $\hat\rho_n^a$ follows. \hfill $\square$ 
\end{example}

\begin{example}[Consistency of alternative plug-in CREs for spectral risk measures]\label{es:con.alternative.spectral}
The risk spectrum $\phi$ could be approximated using different weighting schemes. 
Let us consider the setup introduced in  Example~\ref{ex:spectral_RM} but with alternative weights defined by as $a_{i,n} := \frac{\phi(i/n)}{\sum_{k=1}^n \phi(k/n)}$, $n>1$, $i=1,\ldots,n$, we refer to~\cite[Section 5]{Ace2002} where this approximation scheme is introduced and discussed. Using a similar argument as in Example~\ref{ex:consistenty.srm} one can show that the corresponding risk estimator is also strongly consistent; see also Theorem~5.4 in \cite{Ace2002}.\hfill $\square$
\end{example}

The consistency of estimators for general risk measures has been well studied in the literature. Broadly speaking, using the language of this manuscript, these results fall into two (overlapping) categories: non-parametric plug-in estimator (e.g. Example~\ref{ex:ES_coherent}) and empirical estimators (e.g. Theorem~\ref{th:plug_in_ECDF}). We emphasize that in all these works, the focus has been on statistical asymptotic properties (such as consistency and rates of convergence) and on certain selected economic or financial properties (such as robustness and elicitability). In contrast, the present work concentrates on comprehensive risk management properties of these estimators. 

For the sake of completeness, we review some of the existing key results.  We recall that a law-invariant CRM $\rho$, under some mild conditions, e.g. from \cite{Kus2001}, admits the representation 
\begin{equation}\label{eq:law-inv-CRM}
    \rho(X) =\sup_{\mu\in\cM}\wvar_{\mu}(X)= \sup_{\mu\in\cM}\int_{(0,1]} \ES_{\alpha}(X) \mu(\dif \alpha),  
\end{equation}
for some set $\cM\subset\cM^f$. Similar to Example~\ref{ex:spectral_RM}, using a natural discrete approximation of the integrals as well as replacing  $\ES_{\alpha}, \alpha\in(0,1]$, by a given family of estimators  $\widehat{\ES}_\alpha,  \alpha\in(0,1]$, we can consider the estimator 
\begin{equation}\label{eq:CRE-law-inv-general}
    \hat{\rho}_n(\bfx) : = \sup_{\mu\in\cM} \sum_{i=1}^n  \widehat{\ES}_{\alpha_i}(\bfx)   \mu ( (\alpha_i,\alpha_{i+1}]),
\end{equation}
where $(\alpha_i)$ forms a uniform partition of $[0,1]$. Equivalently, after some direct algebraic transformations, it can be written as 
\[
\hat{\rho}_n(\bfx) = \sup_{a\in M^*} \langle a,-s(\bfx) \rangle,
\]
for some explicitly computed class of weights $M^*$, i.e. supremum over a class of $L$-estimators. We note that while there is a vast literature on asymptotic properties of $L$-estimators, those methods rarely can be extended directly to the supremum of a set of $L$-estimators.  In \cite[Theorem 3.15]{PflugWozabal2010}, the authors prove, under some fairly general assumptions, that $\hat\rho_n$ given by \eqref{eq:CRE-law-inv-general} is consistent, and asymptotically normal with rate of convergence $n^{1/2}$; see also \cite{Wozabal2009}. In \cite{ConDegSca2010} the authors study the robustness and sensitivity of similar CREs.

For an arbitrary law-invariant CRM $\rho$, in view of Theorem~\ref{th:plug_in_ECDF}, one can build a law-invariant CRE, which we call the empirical (plug-in) risk estimator. In \cite{Belomestny2012} the authors show that these estimators are consistent and satisfy  a central limit theorem with usual rate $n^{1/2}$; the manuscript also considers non-i.i.d. data. We refer to  \cite{Weber2007,Che2008,Beutner2010}, for some earlier works on this topic. Finally, we mention \cite{BarTan2023} that investigates the same class of empirical estimators but for a larger class of law-invariant risk measures, where the authors show that generally speaking, the rate of convergence is not necessarily classical $n^{1/2}$. We also refer to \cite{BarTan2023} for a comprehensive and relatively up to date literature review on this topic.

\section{\rev{Numerical study: subadditivity violations and evaluation of weighting schemes}}\label{sec:ES-Lestimator}

In this section, we provide three numerical illustrations that highlight the practical relevance of the results developed in Section~\ref{sec:risk_estimators} and Section~\ref{sec:robust}. The examples serve two purposes. First, we demonstrate that the desirable normative properties of risk measures -- most notably subadditivity -- may fail at the estimation level even when the underlying population risk measure satisfies them. Second, we show how the structural characterization of CREs translates into materially different capital outcomes through alternative admissible weighting schemes. All examples are conducted in a regulatory-style setup with learning period length $n=250$ and confidence levels consistent with Basel-type applications ($\alpha=1\%$ for VaR and $\alpha=2.5\%$ for ES). In addition to simulated data, we use empirical market data from the Fama--French data library; see~\cite{FamFre2026} for details. Specifically, we employ daily average value-weighted returns for {\it 100 Portfolios Formed on Size and Book-to-Market}. This dataset provides a broad cross-section of equity portfolios with heterogeneous risk characteristics and is widely used in empirical asset pricing and risk analysis. For empirical investigations of subadditivity, we evaluate all pairwise combinations of the 100 portfolios using daily returns from 2025, resulting in 4950 portfolio pairs.

The first example investigates subadditivity violations of the empirical VaR estimator, both under simulated Gaussian samples and on market data. The second example considers a semi-parametric ES estimator based on a Generalised Pareto tail fit and illustrates that non-coherent ES estimators may also violate subadditivity in practice. The third example restricts attention to coherent ES estimators and compares several admissible $L$-estimator weight structures, thereby showing  how different CRE representations affect estimation accuracy, bias, and capital protection. Taken together, these examples demonstrate that coherence of a theoretical risk measure does not automatically carry over to its estimator, and that within the class of coherent estimators, the selection of the weight structure represents a structurally meaningful modeling choice. 

\subsection{Subadditivity violation for empirical VaR estimator -- normal samples and market data}\label{S:ex1}

In this section, we consider the empirical VaR estimator $\widehat\var_{\alpha,n}^{\textrm{emp}}$ defined in Example~\ref{ex:var} and check if this estimator could genuinely violate the subadditivity property on both simulated and market data. In line with the regulatory framework, we fix $n=250$ and $\alpha=1\%$. From a theoretical point of view, noting that the set $D:=\{(\mathbf{x}, \mathbf{x}')\in \mathbb{R}^{2n}\colon \widehat\var_{\alpha,n}^{\textrm{emp}}(\mathbf{x}+\mathbf{x}')>\widehat\var_{\alpha,n}^{\textrm{emp}}(\mathbf{x})+\widehat\var_{\alpha,n}^{\textrm{emp}}(\mathbf{x}')\}$ is non-empty and open, due to the continuity of $\mathbf{x}\mapsto \widehat\var_{\alpha,n}^{\textrm{emp}}(\mathbf{x})$, we immediately deduce that the subadditivity condition is violated with positive probability for any i.i.d. sample from a full-support distribution. Our goal is to show that the violation probability is statistically material, and could occur even in the Gaussian setup under which the theoretical VaR is a subadditive risk measure.

First, we consider simulated samples $(\mathbf{x}, \mathbf{x}')$ from a bivariate Gaussian distribution with zero means, unit variances, and various correlation parameters $\rho\in \{0, 0.1, \ldots, 0.9\}$. We perform a Monte Carlo study for $N=10\,000$ simulations (separately for each $\rho$), each of size $n=250$, and evaluate the subadditivity ratio statistic given by
\begin{equation}\label{eq:VaR_ratio}
    r_{\var}(\mathbf{x}, \mathbf{x}'):=\frac{\widehat\var^{\mathrm{emp}}_{\alpha,n}(\mathbf{x}+\mathbf{x}')}{\widehat\var^{\mathrm{emp}}_{\alpha,n}(\mathbf{x})+\widehat\var^{\mathrm{emp}}_{\alpha,n}(\mathbf{x}')}-1;
\end{equation}
note that subadditivity violation occurs if and only if this ratio is (strictly) positive. The MC-induced violation probabilities are presented in Figure~\ref{fig:ex:1}, left panel.
\begin{figure}[htp!]
    \centering
    \includegraphics[width=0.33\linewidth]{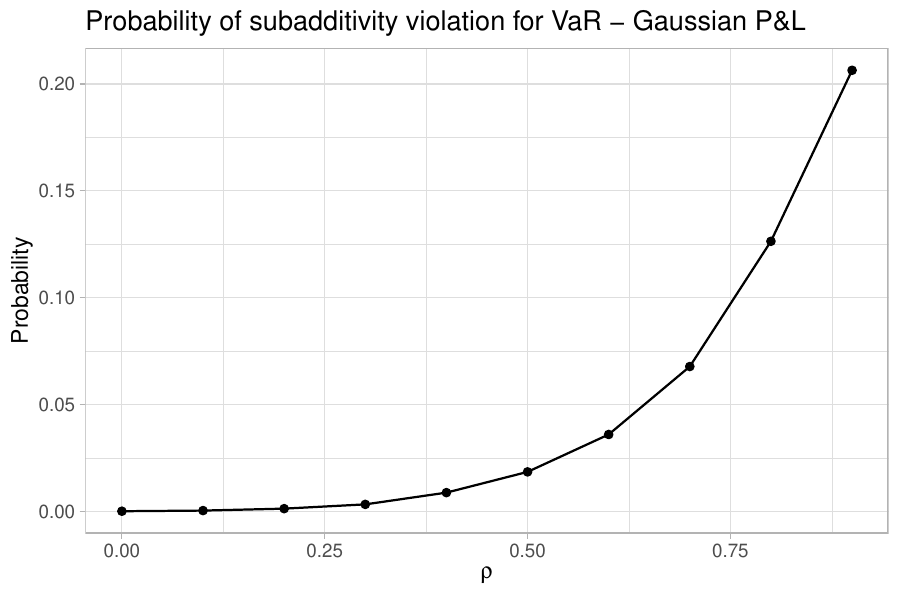}
    \includegraphics[width=0.33\linewidth]{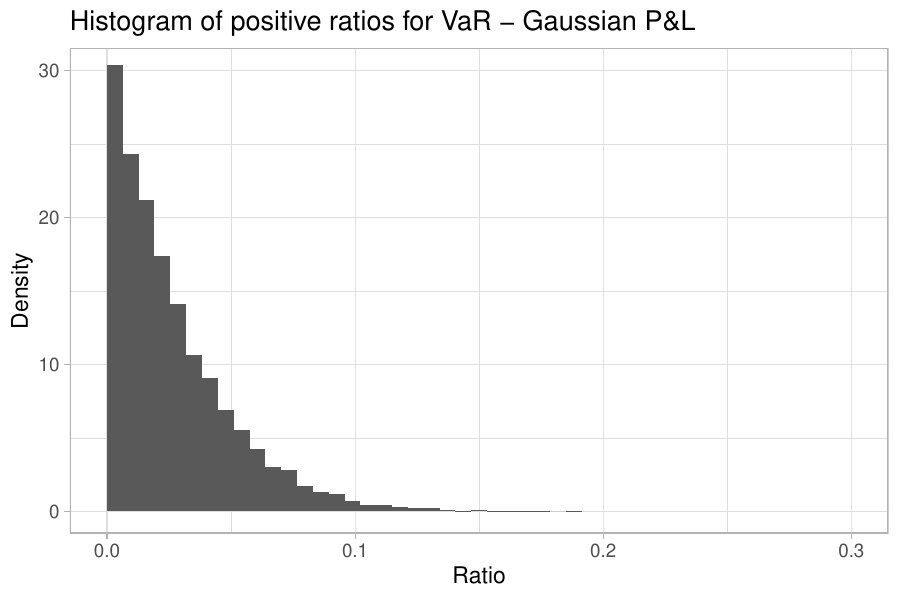}
    \includegraphics[width=0.33\linewidth]{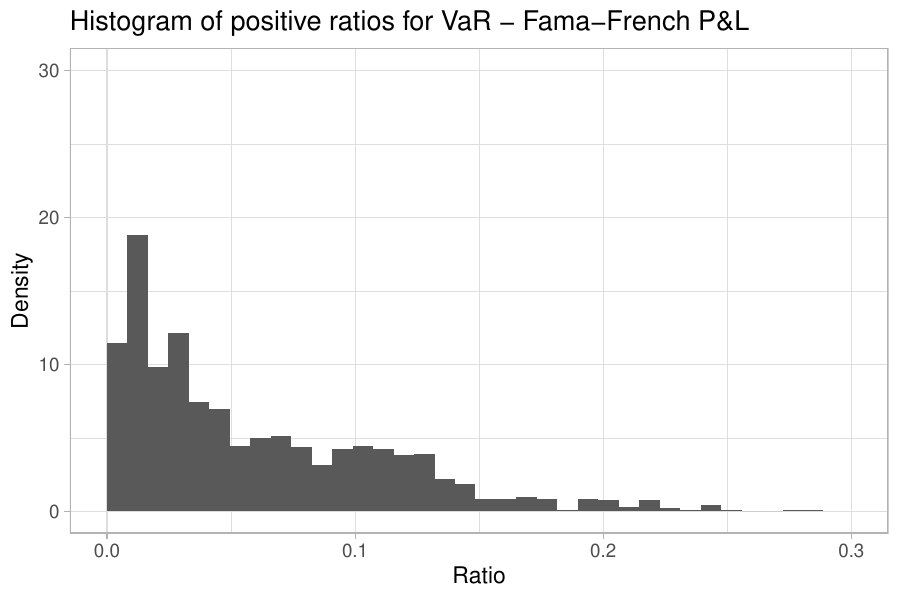}
    \caption{Left panel:  simulated probability of the subadditivity violation for $\var$ estimator discussed in Section~\ref{S:ex1} with respect to the correlation coefficient of the underlying bivariate normal distribution. Middle panel: the histogram of positive ratio values for the statistic defined in~\eqref{eq:VaR_ratio}, for the normally distributed data with $\rho=0.9$. Right panel: an analogous histogram for the Fama-French dataset.}
    \label{fig:ex:1}
\end{figure}
In particular, we note that the violation probability for
$\rho=0.9$ (a plausible correlation for equity portfolio market data)  is larger than 20.5\%, which implies that even in the bivariate Gaussian setup, the probability of subadditivity violation could be substantial. This observation should be contrasted with the fact that the (theoretical) VaR risk measure is subadditive for elliptical distributions. We note that our observation aligns with~\cite{DanJorSam2013}, which analyzed the discrepancy between theoretical subadditivity and its violation in simulation studies.

Second, to further study the  relative impact of the violation, we investigated the histogram of the violation ratio statistic for $\rho=0.9$; see Figure~\ref{fig:ex:1}, middle panel. We note that  the size of the violation could be substantial, often larger than 5\%, and in extreme cases even bigger than 10\%. This implies that a lack of subadditivity can not only appear in the sample data but also significantly affect the risk analysis or distort the risk assessment, for example, when evaluating diversification benefits.

Finally, we checked the subadditivity violation for the empirical $\var$ estimator using the Fama--French dataset. For each pair of the 100 portfolios in the dataset, and one-day returns from 2025, we computed the ratio defined in~\eqref{eq:VaR_ratio}. Subadditivity was violated in 22.4\% of the total 4950 pairs. The histogram of the ratio statistic, shown in the right panel of Figure~\ref{fig:ex:1}, indicates that the relative impact of the violation is much more severe than in theoretical Gaussian data, with over 20\% of the sample exceeding the 10\% threshold.  Also, the average correlation between all pairs of portfolios was 73\%, confirming the observation that high correlation could lead to a more pronounced subadditivity violation. These show that the empirical $\var$ estimator can substantially violate the subadditivity normative property on market data, and the extent of the violation can be considerable.

It should be emphasized that, while on the theoretical level, the violation of $\var$ subadditivity is often presented on rather synthetic discrete random variables and does not occur for elliptical distributions, it is much more frequent and pronounced in sample estimation, even when a realistic sample size ($n=250$) and confidence level (1\%) is considered. This highlights the need for caution when attempting to directly transfer theoretical risk measure properties to empirical risk estimators.

\subsection{Subadditivity violation for a parametric ES estimator based on GPD -- Student's $t$ samples and market data}\label{S:ex2}

In this section, we want to assess whether the subadditivity property could be violated for  an exemplary non-coherent risk estimator of ES showing the practical relevance of  Theorem~\ref{th:comonotonic_CRE}. To this end, we consider a semi-parametric ES estimator based on the Generalised Pareto Distribution (GPD) fitted to the left tail of the underlying distribution. For a regulatory-style setup with $n=250$ and $\alpha=2.5\%$, we consider the GPD plug-in ES estimator given by
\begin{equation}\label{eq:gpd.es}
\widehat\ES^{\mathrm{GPD}}_{\alpha,n}(\mathbf{x}):= \frac{\widehat\var^{\mathrm{GPD}}_{\alpha,n}(\mathbf{x})}{1-\hat\xi} + \frac{\hat\beta + \hat\xi u}{1 - \hat\xi},
\end{equation}
where  $\widehat\var^{\mathrm{GPD}}_{\alpha,n}(\mathbf{x}):= -u + \frac{\hat\beta}{\hat\xi} \left( \left(\frac{\alpha}{p_u}\right)^{-\hat\xi} - 1 \right)$ is the GPD plug-in VaR estimator, $u$ is the threshold parameter corresponding to the empirical 20\%  sample quantile,\footnote{The key conclusions remain the same for other levels in the reasonable range.} $p_u:=\frac{1}{n}\sum_{i=1}^n 1_{\{ x_i\leq u\}}$ is the threshold sample ratio count, and $(\hat\beta,\hat\xi)\in \bR_+\times (-\infty,1)$ are GPD distribution parameters fitted to the exceedance vector (i.e. positive coordinates of $u-\mathbf{x}$). The parameters were fitted using the standard maximum log-likelihood procedure; we refer to Section 5.2 in~\cite{McNeilFreEmb2015} for more details on this estimator and its links to Extreme Value Theory. As we later show on both simulated and market data, the GPD plug-in estimator defined in~\eqref{eq:gpd.es} is not a CRE. 

\begin{figure}[htp!]
    \centering
    \includegraphics[width=0.33\linewidth]{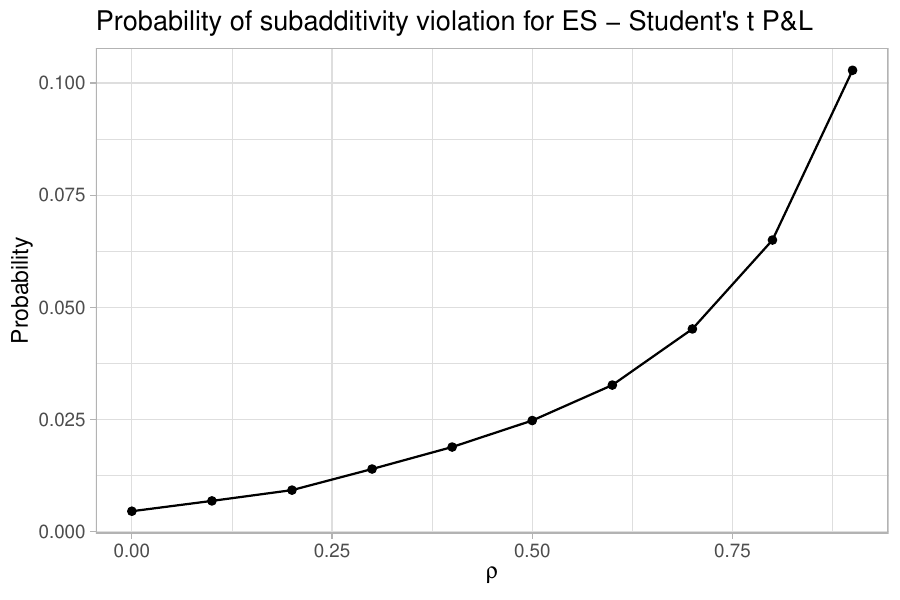}
    \includegraphics[width=0.33\linewidth]{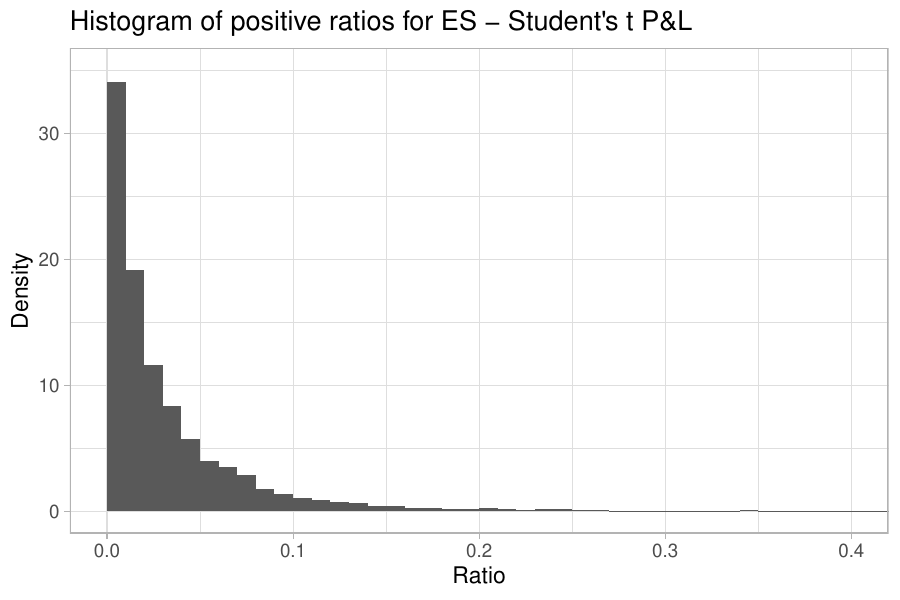}
    \includegraphics[width=0.33\linewidth]{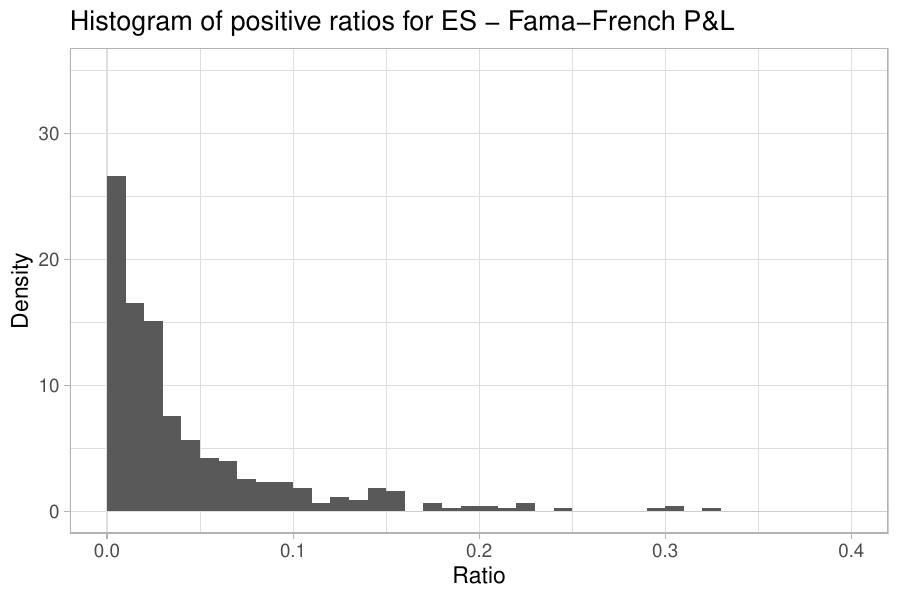}
    \caption{Left panel: the simulated probability of subadditivity violation for ES estimator, discussed in Section~\ref{S:ex2}, with respect to the parameter $\rho$ and with the underlying bivariate Student's $t$-distribution. Middle panel:  the histogram of ES ratios for the bivariate Student's $t$-distribution data with $\rho=0.9$. Right panel: an analogous histogram for Fama--French dataset.}
    \label{fig:ex:2}
\end{figure}

As in Section~\ref{S:ex1}, we empirically check the probability of subadditivity violations for $\widehat\ES^{\mathrm{GPD}}_{\alpha,n}$ using both simulated and market data, and considering the subadditivity ratio $r_{\ES}(\mathbf{x}, \mathbf{x}')$ defined analogously to \eqref{eq:VaR_ratio}. 

For simulated data, to take into account the fact that the GPD estimator is used typically for data with heavy-tails, to get $(\mathbf{x}, \mathbf{x}')$ we use the bivariate Student's $t$ distribution with $\nu=4$ degrees of freedom. As before, we perform MC simulations of size $N=10\,000$ and consider a bivariate Student's t distribution with zero locations, unit scales, and correlation in the same range, that is, $\rho\in\{0, 0.1, \ldots 0.9\}$. The results are presented in Figure~\ref{fig:ex:2}, in a manner analogous to Figure~\ref{fig:ex:1}.

For $\rho=0.9$, the subadditivity is violated in more than 10\% of the runs. Moreover, for 8\% of the violations, estimated ES for $\mathbf{x}+\mathbf{x}'$ exceeded the sum of individual ES estimates by more than 10\%. In particular, this confirms that the GPD plug-in ES estimator is not coherent.

As before, we also checked the subadditivity violation on Fama--French dataset using 2025 daily portfolio return pairs. The subadditivity was violated in 8.7\% cases.\footnote{Out of 4950 pairs of portfolios, the numerical maximum likelihood fit failed in 99 cases, which were excluded from further analysis.} From the right panel of Figure~\ref{fig:ex:2} we notice that the ES estimate for the sum of the two return samples exceeded its upper bound by more than 10\% in 12\% of the violations.

To further investigate this phenomenon, we consider an exemplary pair of portfolios of the Fama--French dataset (ME4.BM10 and ME7.BM8), where the subadditivity was violated, see Figure~\ref{fig:ex:2.1}. 
\begin{figure}[htp!]
    \centering
    \includegraphics[width=0.33\linewidth]{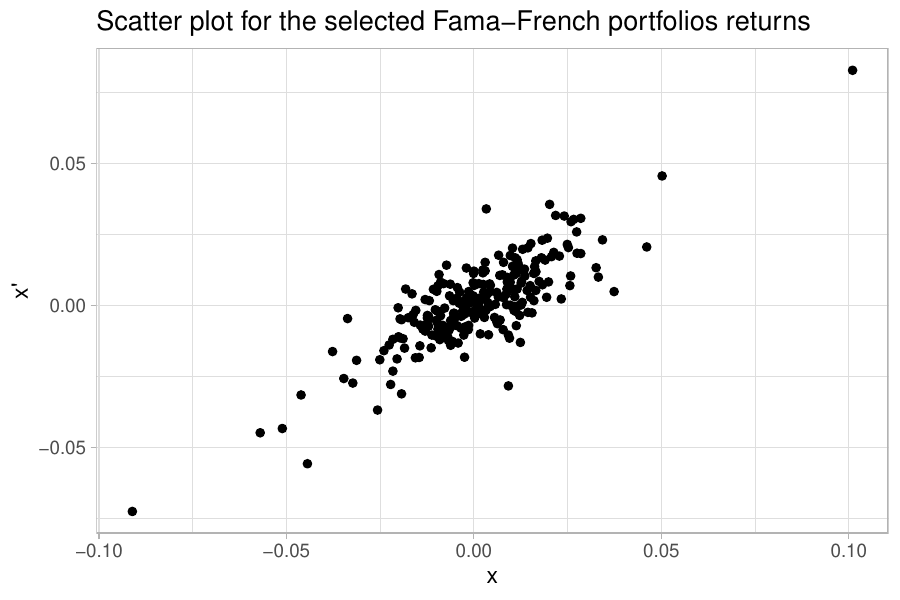}
    \includegraphics[width=0.33\linewidth]{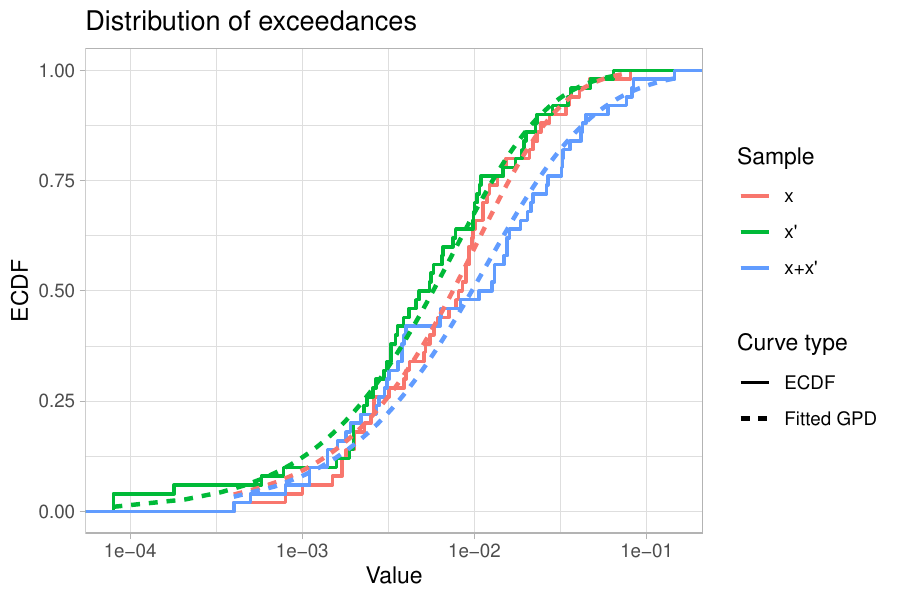}
    \caption{Left panel: the scatter plot of returns for ME4.BM10 (X axis) and ME7.BM8 (Y axis) portfolio from Fama-French dataset. Right panel:  the empirical distribution functions of the corresponding exceedance vectors (solid lines) and the cumulative distribution functions of the fitted GPDs (dashed lines).}
    \label{fig:ex:2.1}
\end{figure}
The left panel shows the scatter plot, indicating a relatively strong correlation between these two portfolios (correlation coefficient equals 0.82). The right panel shows the empirical cumulative distribution functions for the underlying exceedance vectors as well as the CDFs of the fitted GPDs, confirming a relatively good fit for the tails.  Nevertheless, subadditivity of ES is violated: the estimated ES values for ME4.BM10 and ME7.BM8 are $0.053$ and $0.048$, respectively, while the ES of their combined portfolio equals $0.124$, yielding $0.124 > 0.101 = 0.053 + 0.048$. 

In conclusion, we note that the subadditivity property can be substantially violated for non-coherent estimators, even when the underlying risk measure is coherent. This further underscores the importance of the representation results in Section~\ref{sec:robust}, which show that $L$-estimators provide the appropriate structural form for coherent estimation of ES.

\subsection{Weight-induced performance differences among coherent ES estimators}\label{S:ex3}

In this section we provide an illustrative comparison of CRE-based ES estimators, focusing on how different L-estimator weight structures affect the accuracy, bias, and capital adequacy. We compare three ES estimators under a regulatory-style setup with confidence level $\alpha=2.5\%$ and sample size $n=250$, following an i.i.d.\ sampling framework. The first estimator is $\widehat{\ES}^{\textrm{emp}}_{\alpha,n}$, that is, the empirical risk estimator for ES defined in~\eqref{eq:ES_est_FRTB}. 
The second estimator is the popular ES estimator based on the sample conditional mean, that is $\widehat{\ES}^{\textrm{scm}}_{\alpha,n}$, see~\eqref{eq:ES-param}. The third estimator is given by
\[
\textstyle \widehat\ES_{\alpha,n}^{\textrm{qt6}}(\bfx):=\frac{-1}{\alpha(n+1)}\Big(\frac{3}{2} x_{1:n}+\sum_{i=2}^{M_6-1} x_{i:n}+\frac{1+2 R_6-R_6^2}{2} x_{M _6:n}+\frac{R_6^2}{2} x_{(M_6+1):n}\Big),
\]
where $M_6:=\lfloor \alpha(n+1)\rfloor $, $R_6:=\alpha(n+1)-\lfloor\alpha(n+1)\rfloor$, and corresponds to a plug-in estimator associated with the Type~6 sample quantile with flat tail extrapolation; see~\cite[Annex~I]{EBA2023RTS05} and~\cite{HynFan1996}. For notational simplicity, instead of using full names, we refer to these three estimators simply as \#1, \#2, and \#3, respectively, see Table~\ref{T:weights} where we also report exact weighting schemes and total mass. Our focus is on how differences in the associated order statistic weights influence estimation accuracy and related performance characteristics.
\begin{table}[htp!]
    \centering
    \begin{tabular}{cccccccccc}
        Id & Estimator & $a_1$ & $a_2$ & $a_3$ & $a_4$ & $a_5$ & $a_6$ & $a_7$ & sum\\\hline \noalign{\vspace{2pt}}
        \#1 & $\widehat{\ES}^{\textrm{emp}}_{\alpha,n}$ & 0.160 & 0.160 & 0.160 & 0.160 & 0.160 & 0.160 & 0.040& 1.000\\[2pt]
        \#2 & $\widehat{\ES}^{\textrm{scm}}_{\alpha,n}$ & 0.167 & 0.167 &  0.167 & 0.167 & 0.167 & 0.167 & 0.000 & 1.000 \\[2pt]
        \#3 & $\widehat{\ES}^{\textrm{qt6}}_{\alpha,n}$ & 0.239 & 0.159 & 0.159 & 0.159 & 0.159  & 0.117 & 0.006& 1.000
    \end{tabular}
    \caption{The weights assigned to the first seven order statistics for estimator \#1-\#3, calculated for $\alpha=2.5\%$ and $n=250$, with  the remaining weights equal to 0. The weights for all estimators are decreasing, which is consistent with CRE representation \eqref{eq:comonotonic_representation}. The results are rounded to 3 decimal digits.}
    \label{T:weights}
\end{table}

In view of Theorem~\ref{th:law_inv_est_representation}, all three estimators satisfy the CRE representation, since their weight vectors are decreasing and sum to one. The weights may also be interpreted as alternative discrete approximations of risk spectra, cf.~Example~\ref{ex:consistenty.srm}. Despite common structure, the estimators differ markedly in how weights are allocated across order statistics, particularly at the boundaries (e.g.\ $a_1$ and $a_6$ or $a_7$). While such differences are unlikely to affect bias under moderately tailed distributions, they may induce systematic distortions in the presence of heavy tails or when extreme observations arise from exogenous shocks.

To quantify these effects, we evaluate estimator performance using a set of benchmark metrics,  summarized in Table~\ref{T:performance.metrics}, commonly employed in regulatory and simulation-based studies, following the framework of~\cite[Annex~I]{EBA2023RTS05}.

\begin{table}[htp!]
\centering
\small
\setlength{\tabcolsep}{4pt}
\begin{tabular}{p{0.9cm} p{4.2cm} p{4.6cm} m{5.8cm}}
\toprule
Metric 
& Theoretical definition 
& Monte Carlo implementation 
& Interpretation \\ 
\midrule
\textbf{SE} 
&
$\displaystyle
\frac{\sqrt{\mathbb{E}\!\left[(\widehat{\ES}_{\alpha,n}-\ES_\alpha)^2\right]}}
{\ES_\alpha}
$
&
$\displaystyle
\frac{\sqrt{\tfrac{1}{K}\sum_{k=1}^K
(\widehat{\ES}_{\alpha,n}(\bfx_k)-\ES_\alpha)^2}}
{\ES_\alpha}
$
&
The (scaled) root mean squared error is an accuracy measure. Lower values indicate tighter concentration around the true ES. \\[1.2ex]
\midrule
\textbf{BI} 
&
$\displaystyle
\frac{\mathbb{E}[\widehat{\ES}_{\alpha,n}]-\ES_\alpha}
{\ES_\alpha}
$
&
$\displaystyle
\frac{1}{K}\sum_{k=1}^K
\frac{\widehat{\ES}_{\alpha,n}(\bfx_k)}{\ES_\alpha}-1
$
&
Statistical Bias measures systematic over- or underestimation. Values close to zero indicate approximate unbiasedness. \\[1.2ex]
\midrule
\textbf{CT} 
&
$\displaystyle
\inf_{\beta\in(0,1)}
\{\ES_\beta(X+\widehat{\ES}_{\alpha,n})\ge 0\}
$
&
$\displaystyle
\inf_{\beta\in(0,1)}
\Bigl\{
\widehat{\ES}^{\textrm{emp}}_{\beta,K}
(\tilde x_k+\widehat{\ES}_{\alpha,n}(\bfx_k))\ge 0
\Bigr\}
$
&
Safe Confidence Threshold measures the minimal confidence level at which the secured position is acceptable. Deviation from $\alpha$ reflects capital adequacy.\footnote{This measure can be viewed as an acceptability index (or performance measure) dual to the ES family of risk measures, that verifies whether the estimated capital reserve secures the portfolio at a confidence level close to the reference value $\alpha \in (0,1)$. We refer to \cite{MolPit2017}, where this metric is discussed in details, and used to construct a targeted ES backtest.}
\\
\bottomrule
\end{tabular}
\caption{Benchmark performance metrics for ES estimator comparison. SE and BI are reported in relative terms (percentages of the true ES). Monte Carlo estimates are computed using $K=10^7$ replications. Here $\bfx_k$ refers to $k$th MC sample and $\tilde x_k$ refers to the $k$th independent pick from the underlying distribution.}
\label{T:performance.metrics}
\end{table}

In the numerical study, benchmark performance metrics are computed using Monte Carlo simulation for a representative family of distributions following \cite[Annex~I]{EBA2023RTS05}. Specifically, we consider eight distributions: the standard normal distribution, a Student’s $t$-distribution with $\nu=5$ degrees of freedom, and six normal–inverse Gaussian (NIG) distributions with parameters
\[
(\alpha,\beta)\in\{(0.4,\pm0.14),(0.55,\pm0.3025),(0.4,\pm0.22)\},
\]
and with location $\mu=0$ and scale $\delta=1$. The NIG family allows for flexible control of skewness and kurtosis, which are available in closed form and preserved under convolution. For the chosen parameter values, the corresponding skewness–excess kurtosis pairs range from $(\pm0.9460,\,3.6282)$ to $(\pm1.8702,\,8.5174)$, covering magnitudes compatible to stress period trading-book P\&Ls, based on banks' real trading book P\&L data from the period 2014–2022~\cite[Annex~I, p.~162]{EBA2023RTS05}, cf. {also}~\cite{The2020}. 

For each distribution, all performance metrics are evaluated using Monte Carlo simulations with $K=10^7$ replications. In addition to results for the ES$_{2.5\%}$ estimators~\#1–\#3, we also report benchmark values for the value-at-risk 
estimator based on the linearly interpolated Type~6 sample quantile for $n=250$ and $\alpha=1\%$, that is,
\begin{equation}\label{eq:var.interpolation}
\widehat\var_{1\%}(\bfx):=-\left( 0.49 x_{(2:250)} + 0.51 x_{(3:250)}\right),
\end{equation}
see ~\cite{HynFan1996} and~\cite[p. 267]{EGIM}. 
Note that $\var_{1\%}$ is a reference market risk metric in the Basel II framework and for the normally distributed random variable $X$ we get $\var_{1\%}(X) = 2.326$ and $\ES_{2.5\%}(X)= 2.336$, cf.~\eqref{eq:es-normal}, so that this metric could be used for VaR and ES comparison purposes; see also  \cite{KelRos2016}.

The numerical results presented in Table~\ref{tab:biases:RTS} reveal systematic differences in performance across the considered ES estimators.
\begin{table}[htp!]
    \centering
\scalebox{0.8}{
\centering
\small
\setlength{\tabcolsep}{4pt}

\begin{minipage}[t]{0.28\textwidth}
\centering
\caption*{\textbf{Normal}}
\begin{tabular}{lccc}
\toprule
Est. & $SE$ & $BI$ & $CT$ \\
\midrule
\#1 & \textbf{8.7\%} & -1.5\% & 3.1\% \\
\#2 & 8.7\% & \textbf{-0.8\%} & 3.0\% \\
\#3 & 9.2\% & 1.3\% & \textbf{2.6\%} \\
\midrule
\gray{$\widehat{\mathrm{VaR}}_{1\%}$}
& \gray{10.8\%} & \gray{3.3\%} & \gray{1.0\%} \\
\bottomrule
\end{tabular}
\end{minipage}
\hspace{0.02\textwidth}
\begin{minipage}[t]{0.28\textwidth}
\centering
\caption*{\textbf{Student's $t$ ($\nu=5$)}}
\begin{tabular}{lccc}
\toprule
Est. & $SE$ & $BI$ & $CT$ \\
\midrule
\#1 & \textbf{16.7\%} & -1.8\% & 3.2\% \\
\#2 & 17.1\% & \textbf{-0.8\%} & 3.1\% \\
\#3 & 19.9\% & 3.6\% & \textbf{2.8\%} \\
\midrule
\gray{$\widehat{\mathrm{VaR}}_{1\%}$}
& \gray{21.6\%} & \gray{7.9\%} & \gray{1.0\%} \\
\bottomrule
\end{tabular}
\end{minipage}
\hspace{0.02\textwidth}
\begin{minipage}[t]{0.28\textwidth}
\centering
\caption*{\textbf{NIG ($\alpha=0.4,\ \beta=-0.22$)}}
\begin{tabular}{lccc}
\toprule
Est. & $SE$ & $BI$ & $CT$ \\
\midrule
\#1 & \textbf{20.4\%} & -2.2\% & 3.2\% \\
\#2 & 20.8\% & \textbf{-0.8\%} & 3.1\% \\
\#3 & 23.7\% & 4.7\% & \textbf{2.7\%} \\
\midrule
\gray{$\widehat{\mathrm{VaR}}_{1\%}$}
& \gray{28.1\%} & \gray{10.4\%} & \gray{1.0\%} \\
\bottomrule
\end{tabular}
\end{minipage}
\hspace{0.02\textwidth}
\begin{minipage}[t]{0.28\textwidth}
\centering
\caption*{\textbf{NIG ($\alpha=0.4,\ \beta=0.22$)}}
\begin{tabular}{lccc}
\toprule
Est. & $SE$ & $BI$ & $CT$ \\
\midrule
\#1 & \textbf{17.8\%} & -2.1\% & 3.2\% \\
\#2 & 18.2\% & \textbf{-0.9\%} & 3.1\% \\
\#3 & 20.4\% & 3.9\% & \textbf{2.7\%} \\
\midrule
\gray{$\widehat{\mathrm{VaR}}_{1\%}$}
& \gray{24.1\%} & \gray{8.7\%} & \gray{1.0\%} \\
\bottomrule
\end{tabular}
\end{minipage}
}

\par\medskip

\scalebox{0.8}{
\centering
\small
\setlength{\tabcolsep}{4pt}

\begin{minipage}[t]{0.28\textwidth}
\centering
\caption*{\textbf{NIG ($\alpha=0.4,\ \beta=0.14$)}}
\begin{tabular}{lccc}
\toprule
Est. & $SE$ & $BI$ & $CT$ \\
\midrule
\#1 & \textbf{17.9\%} & -2.1\% & 3.2\% \\
\#2 & 18.2\% & \textbf{-0.9\%} & 3.1\% \\
\#3 & 20.5\% & 3.9\% & \textbf{2.7\%} \\
\midrule
\gray{$\widehat{\mathrm{VaR}}_{1\%}$}
& \gray{24.2\%} & \gray{8.8\%} & \gray{1.0\%} \\
\bottomrule
\end{tabular}
\end{minipage}
\hspace{0.02\textwidth}
\begin{minipage}[t]{0.28\textwidth}
\centering
\caption*{\textbf{NIG ($\alpha=0.4,\ \beta=-0.14$)}}
\begin{tabular}{lccc}
\toprule
Est. & $SE$ & $BI$ & $CT$ \\
\midrule
\#1 & \textbf{19.3\%} & -2.1\% & 3.2\% \\
\#2 & 19.7\% & \textbf{-0.8\%} & 3.1\% \\
\#3 & 22.3\% & 4.4\% & \textbf{2.7\%} \\
\midrule
\gray{$\widehat{\mathrm{VaR}}_{1\%}$}
& \gray{26.5\%} & \gray{9.7\%} & \gray{1.0\%} \\
\bottomrule
\end{tabular}
\end{minipage}
\hspace{0.02\textwidth}
\begin{minipage}[t]{0.28\textwidth}
\centering
\caption*{\textbf{NIG ($\alpha=0.55,\ \beta=0.3025$)}}
\begin{tabular}{lccc}
\toprule
Est. & $SE$ & $BI$ & $CT$ \\
\midrule
\#1 & \textbf{16.8\%} & -2.1\% & 3.2\% \\
\#2 & 17.1\% & \textbf{-0.9\%} & 3.1\% \\
\#3 & 19.2\% & 3.5\% & \textbf{2.7\%} \\
\midrule
\gray{$\widehat{\mathrm{VaR}}_{1\%}$}
& \gray{22.6\%} & \gray{8.0\%} & \gray{1.0\%} \\
\bottomrule
\end{tabular}
\end{minipage}
\hspace{0.02\textwidth}
\begin{minipage}[t]{0.28\textwidth}
\centering
\caption*{\textbf{NIG ($\alpha=0.55,\ \beta=-0.3025$)}}
\begin{tabular}{lccc}
\toprule
Est. & $SE$ & $BI$ & $CT$ \\
\midrule
\#1 & \textbf{18.6\%} & -2.1\% & 3.2\% \\
\#2 & 19.0\% & \textbf{-0.8\%} & 3.1\% \\
\#3 & 21.5\% & 4.1\% & \textbf{2.7\%} \\
\midrule
\gray{$\widehat{\mathrm{VaR}}_{1\%}$}
& \gray{25.4\%} & \gray{9.2\%} & \gray{1.0\%} \\
\bottomrule
\end{tabular}
\end{minipage}
    }
    \caption{Performance output for ES estimators \#1-\#3. Sample size $n=250$ and confidence threshold $\alpha=2.5\%$. 
    Performance metrics are defined in Table~\ref{T:performance.metrics}, estimator $\widehat\var_{1\%}$ is given in \eqref{eq:var.interpolation}. The ES estimators with the best performance are highlighted in bold.}
    \label{tab:biases:RTS}
\end{table}
The empirical estimator~\#1 consistently achieves the smallest scaled RMSE across all distributions, thereby providing the best overall fit among the ES candidates. Estimator~\#2, in turn, exhibits the smallest absolute statistical bias throughout the considered scenarios. Hence, while estimator~\#1 offers superior estimation accuracy in terms of variability-adjusted error, estimator~\#2 delivers more accurate centering relative to the true ES; the bias is also less negative when confronted with \#1, which leads to improved control over potential risk underestimation. The ranking changes when confidence-threshold performance is considered. In this dimension, estimators~\#1 and~\#2 are systematically outperformed by estimator~\#3, which produces more conservative risk estimates and achieves improved control of the exceedance frequency. The improved coverage of estimator~\#3, however, comes at the cost of a positive (conservative) bias and a deterioration in scaled RMSE relative to the other ES estimators. 

The SE of the ES$_{2.5\%}$ estimators is consistently lower than SE of the $\var_{1\%}$ estimator, i.e. at comparable capital level the fit for ES is better than the fit for $\var$. However, CT performance appears more sensitive to the underlying distributional specification.

These findings are consistent with the weight structures reported in Table~\ref{T:weights}. The nearly uniform weighting scheme of estimator~\#2 mitigates statistical bias but slightly reduces estimation efficiency due to the exclusion of the seventh order statistic. By contrast, estimator~\#1 achieves improved RMSE performance through a more balanced allocation of weights across tail observations. In both estimators~\#1 and~\#2, the relatively small weight assigned to the most extreme losses appears insufficient to ensure strong capital protection, whereas the increased tail emphasis in estimator~\#3 enhances confidence coverage at the expense of bias and overall fit.

 To further verify our observations, let us consider the Fama--French dataset. While one cannot meaningfully evaluate SE and BI metrics without knowing the true value of the underlying risk, the CT metric could be evaluated on market data following the backtesting scheme described in \cite{MolPit2017}. In a nutshell, for each portfolio from the dataset, and each daily return in 2024-2025, we estimate ES using the prescribed methodologies on the preceding $n=250$ observations and then construct the secured portfolio sample of size $y=(y_i)_{i=1}^{B}$, where $B=500$ is the number of backtesting observations corresponding to the number of daily returns in the period 2024-2025; secured sample is the sum of realized value and estimated risk, as in the definition of CT metric. Then, we compute the backtesting metric $\overline{CT}:=\tfrac{1}{B}\sum_{i=1}^{B}\1_{\{y_1\ldots+y_i<0\}}$ which is the analogue of the MC implementation of the CT metric; see \cite[Proposition 3.1]{MolPit2017} for details. For estimators \#1, \#2, \#3, we get the values of $\overline{CT}$ equal to  4.8\%, 4.6\%, and 4.3\%, respectively; for VaR, the exception rate is equal to 1.5\%. While the ordering of $\overline{CT}$ values is consistent with the ordering of CT values, we note that all metrics materially increased -- this is in fact also true for the VaR metric, showing that one should be careful when applying an i.i.d.-based estimator to empirical data.

More broadly, the results show that estimation accuracy, bias, and capital protection can be systematically adjusted through the choice of weights, motivating further research on robust CRE representations. While the present study is stylized, practical settings often involve additional features such as heteroskedasticity or overlapping P\&L samples, which can substantially affect estimator performance. Comparisons with~\cite[Annex~I]{EBA2023RTS05} and~\cite{AicCroReh2021} confirm that performance may deteriorate when overlapping data are used. This highlights the importance of careful estimator selection and calls for further systematic study of overlapping constructions, which remain insufficiently explored despite their regulatory relevance, cf.~\cite[MAR 33.4(7)]{Bas2019}.

\section{\rev{Concluding remarks}}\label{S:concluding.remarks}
We develop a theoretically sound framework for the estimation of CRM, inspired by the axiomatic theory of risk measures, by imposing analogous properties on estimators as functions of the sampled P\&L, mirroring those satisfied by coherent risk measures as functionals of the population law. The obtained theoretical results fully characterize these risk estimators, that we call CREs.  In particular, we show that any law-invariant CRE admits a representation as a supremum over a family of L-statistics, that is, as the supremum over linear transformations of the sorted sample. Moreover, any law-invariant comonotonic CRE reduces to a single L-estimator, giving a computationally desirable representation.  
These results imply that some commonly used estimators are not in the class of CREs; this includes most parametric plug-in methods that are shown to be structurally incompatible with estimation coherency.

Through a series of examples, we demonstrate that coherence of the underlying (theoretical) risk measure does not automatically extend to its estimators, and that violations of coherence occur with non-negligible probability and are not mere artifacts. In particular, using simulated and market data, we show that: parametric plug-in estimators for Gaussian distributions may fail monotonicity; semi-parametric ES estimators for the GPD are not subadditive with positive probability; sampling from elliptical distributions can lead to subadditivity violations in estimators, even when the original risk measure is subadditive on population level -- for example, this applies to the empirical $\var$, one of the most widely used risk estimators. More generally, parametric (plug-in) estimators possess a fundamentally different structure from CREs. On the other hand, we prove that any empirical risk estimator derived from a law-invariant CRM is itself a law-invariant CRE, which provides a natural and canonical construction of CREs.

The obtained results show that order statistics and $L$-estimators are not merely convenient or incidental: they are structurally necessary for constructing estimators that satisfy important conceptual properties. Such properties are also important in view of current FRTB regulation, see e.g.~\cite[Article 42]{EU2024_1085}. The choice of $L$-statistic weights is a key design element, as illustrated in Section~\ref{S:ex3}, and can substantially affect outcomes in terms of accuracy, sampling error, and estimation confidence even when the risk estimator is coherent. 

Finally, we note that while this study primarily focuses on i.i.d. case and general estimator construction logic, the core definitions and main results, including Theorem~\ref{th:coherent_estimator_representation}, remain valid in a general setting. Preliminary numerical evidence suggests that the main conclusions extend to non-i.i.d. data; however, as expected, theoretical developments in this broader context require new methods and techniques. This is especially important when our results are compared with results presented in~\cite[Annex~I]{EBA2023RTS05} and~\cite{AicCroReh2021} which indicate that the estimation performance might substantially deteriorate when overlapping or heteroskedastic data are used. This highlights the importance of careful estimator selection and calls for further systematic study of overlapping constructions which are left for future work.


\section*{Disclaimer}
\noindent The views and opinions expressed in this paper are the authors’ own and do not necessarily reflect the views and opinions of their current or past employers. In particular,  they cannot be taken to represent those of the European Central Bank (ECB) or to state the ECB’s policy. Neither the ECB nor any person acting on its behalf may be held responsible for the use which may be made of the information contained in this publication, or for any errors which, despite careful preparation and checking, may appear therein.

\section*{Funding}
\noindent Igor Cialenco acknowledges support from the US National Science Foundation grant DMS-2407549.  Damian Jelito acknowledges support from the National Science Centre, Poland, via project 2024/53/B/ST1/00703. Marcin Pitera acknowledges support from the National Science Centre, Poland, via project 2024/53/B/HS4/00433.

\section*{Acknowledgements}
\noindent Martin Aichele thanks Carlo Acerbi for helpful discussions on preparatory work on ES estimation.

\theendnotes

\bibliography{bibliography}

\end{document}